\declaretheorem{theorem}
\providecommand{\U}[1]{\protect\rule{.1in}{.1in}}
\newtheorem*{theorem*}{Theorem}
\DeclareMathOperator{\trace}{Tr}
\DeclareMathOperator\arcsinh{arcsinh}
\newcommand{\ket}[1]{\left| #1 \right>} 
\newcommand{\bra}[1]{\left< #1 \right|} 
\begin{document}
\title{Fundamental limits of quantum-secure covert optical sensing}
\thanks{This research was funded by DARPA under contract number HR0011-16-C-0111, UK EPSRC (EP/K04057X/2) and the National Quantum Technologies Programme (EP/M01326X/1, EP/M013243/1).
This document does not contain technology or technical data controlled under either the U.S.~International Traffic in Arms Regulations or the U.S.~Export Administration Regulations.}
\author{Boulat A. Bash,$^{1}$ Christos N. Gagatsos,$^2$ Animesh Datta,$^2$ and Saikat Guha$^1$}
\affiliation{$^1$\textit{Quantum Information Processing Group, Raytheon BBN Technologies, Cambridge, Massachusetts, USA 02138},\\
$^2$\textit{Department of Physics, University of Warwick, Coventry CV4 7AL, United Kingdom}
}

\begin{abstract}
We present a square root law for active sensing of phase $\theta$ of a single 
  pixel using optical probes that pass through a single-mode lossy 
  thermal-noise bosonic channel.
Specifically, we show that, when the sensor uses an $n$-mode covert optical
  probe, the mean squared error (MSE) of the resulting estimator
  $\hat{\theta}_n$ scales as 
  $\langle (\theta-\hat{\theta}_n)^2\rangle=\mathcal{O}(1/\sqrt{n})$; improving
  the scaling necessarily leads to detection by the adversary with high
  probability.
We fully characterize this limit and show that it is achievable using laser 
  light illumination and a heterodyne 
  receiver, even when the adversary captures every photon that does not
  return to the sensor and performs arbitrarily complex measurement
  as permitted by the laws of quantum mechanics.
\end{abstract}
\maketitle

\section{Introduction}
\label{sec:intro}

Active probing with electromagnetic radiation is used in many practical systems to measure physical properties of objects.
However, there are scenarios where the detection of such probing by an unauthorized third party (which could be the target object) is undesired.
In these scenarios covert, or low probability of intercept/detection (LPI/LPD), signaling must be used.
While covertness is often required by practical stand-off sensing systems, the fundamental limits of sensing under the covertness constraints has been relatively under-explored.

Recently, the fundamental limits of covert communication have been characterized for several classical and quantum channels. Covert communication is governed by the \emph{square root law} (SRL): $\mathcal{O}(\sqrt{n})$ bits can be reliably transmitted in $n$ channel uses without being detected by the adversary; transmission of more bits results in either detection or uncorrectable decoding errors.  The SRL was first proven for the classical wireless channels subject to the additive white Gaussian noise (AWGN) \cite{bash13squarerootjsacisit}, with follow-on works extending this result to discrete memoryless channels (DMCs) and fully characterizing the constant hidden by the Big-$\mathcal{O}$ notation \cite{che13sqrtlawbscisit, bloch15covert, wang15covert}. 

Now consider the lossy thermal-noise bosonic channel, which is the quantum-mechanical model for optical communication. The SRL also governs covert communication over this channel: provided that there exists a noise source that the adversary does not control (for example, the unavoidable thermal noise from blackbody radiation at the operating temperature and wavelength), $\mathcal{O}(\sqrt{n})$ covert bits can be reliably transmitted using $n$ orthogonal spatio-temporal polarization modes. As in the SRL for AWGN channel, transmission of more bits results in either detection or uncorrectable decoding errors \cite{bash15covertbosoniccomm}.  Remarkably, the SRL is achievable using standard optical communication components (laser light modulation and homodyne receiver) even when the adversary has access to all the photons that are not captured by the legitimate receiver, as well as arbitrary quantum measurement, storage and computing capabilities.  Conversely, entangled photon transmissions, as well as arbitrary quantum measurement, storage and computing capabilities do not permit one to reliably transmit more covert bits than the SRL allows, even when the adversary has access to only a fraction of the transmitted photons and is only equipped with a noisy photon counting receiver.  

A covert communications adversary has to decide whether or not a transmission takes place.  Thus, the transmitter has to render the adversary's detector ineffective by ensuring that it can only do a little better than a random decision. The SRL for covert communications arises because, for this to happen, the average symbol power $\bar{n}_{\rm S}$ must scale in the blocklength $n$ as $\bar{n}_{\rm S}=\mathcal{O}(1/\sqrt{n})$.  In the AWGN setting, $\bar{n}_{\rm S}$ is the average squared symbol magnitude, while in the bosonic channel setting it is the mean photon number per mode.  By standard arguments, the total number of reliably transmissible bits thus scales as $n\bar{n}_{\rm S}=\mathcal{O}(\sqrt{n})$.  Since $\lim_{n\to\infty}\mathcal{O}(\sqrt{n})/n=0$, the covert communication channel capacity is zero, however, a non-trivial number of bits can be transmitted when $n$ is large (see a tutorial survey in \cite{bash15covertcommmag}).

The results for the fundamental limits of covert communication over lossy thermal-noise bosonic channel in \cite{bash15covertbosoniccomm} motivate our investigation of the fundamental limits of covert sensing.  We begin by noting that the most effective method of staying covert is passive imaging, which emits no energy. Passive imaging collects the scattered light from a naturally-illuminated (or self-luminous) scene. However, this can be impractical, or even impossible, in many scenarios. For example, the scene could be hidden from direct line of sight or the signal to noise ratio (SNR) at the receiver could otherwise be insufficient to obtain the desired performance. In these situations, active transmitters must be employed to illuminate the target.

\begin{figure}
\centering
\includegraphics[width=\columnwidth]{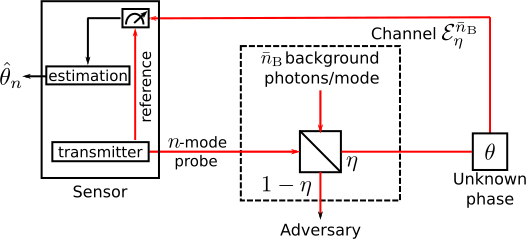}
\caption{Active probing of an unknown phase of a pixel.
Transmitted $n$-mode probe is corrupted by a lossy thermal-noise bosonic 
  channel with transmissivity $\eta$ and thermal background mean photon 
  number $\bar{n}_B$ per mode.
Fraction $1-\eta$ of the photons is lost and can be captured by the adversary,
  while the remaining fraction $\eta$ of the photons is received by the sensor
  after the probe acquires unknown phase $\theta$ in each mode.
An estimate $\hat{\theta}_n$ is computed from the measurement of the received
  probe state and the reference state (which adversary cannot access).
The input-output relationship of the bosonic channel is captured by 
  a beamsplitter of transmissivity $\eta$, with the sensor's transmitter at one
  of the input ports and the phase rotation followed by the sensor's receiver 
  at one of the output ports.
The other input and output ports of the beamsplitter correspond to the 
  environment and the adversary.
Switching the order of the phase rotation and the bosonic channel 
  does not affect phase estimation \cite[App.~A]{tsang13metrology}.
\label{fig:setup}}
\end{figure}

We therefore study the fundamental limits of quantum-secure covert active sensing.  This notion of security is more stringent than, for example, ensuring that the return probes are not spoofed by the target as done in \cite{malik12quantumsecuredimaging} (undetectable probes cannot be spoofed).  As illustrated in Figure \ref{fig:setup}, we explore covert estimation of an unknown phase $\theta$ of a single pixel using an optical probe that passes through a lossy thermal-noise bosonic channel with transmissivity $\eta$ and thermal background mean photon number $\bar{n}_B$ per mode.  Adversary captures up to $1-\eta$ fraction of light from the probe.  We assume that the distance to the target pixel is known. The focus on estimating the unknown phase allows us to leverage the extensive literature in quantum metrology (see \cite{demkowicz15quantmetrologysurvey} for a recent survey); however, we believe that similar results hold in other sensing modalities (such as ranging, reflectometry, target detection, and target classification). Ensuring covertness of transmitted probes imposes the same power constraint $\bar{n}_{\rm S}=\mathcal{O}(1/\sqrt{n})$ photons/mode as in communications. We thus find that covert sensing is subject to its own SRL:

\begin{theorem*}[Square-root law for covert phase sensing]
Suppose the sensor attempts to estimate an unknown phase $\theta$ of a pixel 
  using an $n$-mode optical probe that passes through a lossy thermal-noise
  bosonic channel, as described in Figure \ref{fig:setup}.
Also suppose that the adversary has
  access to fraction $1-\eta$ of the transmitted photons.
Then the sensor can achieve mean squared error (MSE)
  $\langle(\theta-\hat{\theta}_n)^2\rangle=\mathcal{O}(1/\sqrt{n})$ while
  ensuring the ineffectiveness of the adversary's detector.
Attempting to decrease scaling for MSE results in detection of
  the interrogation attempt with high probability.
\end{theorem*}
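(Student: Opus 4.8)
The plan is to prove both halves of the law by reducing them to a single covertness-imposed budget on photon number. I would first show that rendering the adversary's detector ineffective forces the mean signal photon number per mode to obey $\bar{n}_{\rm S}=\mathcal{O}(1/\sqrt{n})$, exactly as in covert communication, so that the total photon number returning to the sensor is $n\bar{n}_{\rm S}=\mathcal{O}(\sqrt{n})$. The achievability half then exhibits a coherent-state protocol with $\bar{n}_{\rm S}=c/\sqrt{n}$ that is covert and attains MSE $=\mathcal{O}(1/\sqrt{n})$, while the converse half shows that any covert probe---of arbitrary quantum form---cannot do better, because phase-estimation MSE is inversely proportional to this same $\mathcal{O}(\sqrt{n})$ photon budget.

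To fix the budget I would model the adversary as a binary hypothesis tester deciding between $H_0$ (no probe sent, so each captured mode is the background thermal state $\rho_0$ of mean $\bar{n}_B$) and $H_1$ (probe sent, so each captured mode is $\rho_0$ perturbed by the leaked fraction $1-\eta$ of the signal). The detector is ineffective exactly when the quantum relative entropy between the adversary's $n$-mode states is bounded by a constant, which by the quantum Pinsker inequality keeps the error probability near $1/2$. The crucial design choice, enabled by the reference beam that the adversary cannot access, is to randomize the phase of each transmitted mode and record it in the reference: the adversary's marginal state then becomes phase-symmetric, hence diagonal in the number basis, which annihilates the first-order (linear in $\bar{n}_{\rm S}$) leakage that a deterministic coherent probe would produce. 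Writing $s=(1-\eta)\bar{n}_{\rm S}$ for the per-mode excess photon number, the identity $D(\rho\|\rho_0)=D(\rho\|\rho_s)+D(\rho_s\|\rho_0)$, valid for any state $\rho$ of mean $\bar{n}_B+s$ (with $\rho_s$ the thermal state of that mean), shows the leakage is quadratic, $D=\Theta(\bar{n}_{\rm S}^2)$ per mode and $D_n=\Theta(n\bar{n}_{\rm S}^2)$ in total; boundedness then yields $\bar{n}_{\rm S}=\mathcal{O}(1/\sqrt{n})$.

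With the budget in hand, achievability follows by heterodyning each returned mode, which is a displaced thermal state of coherent amplitude $\sqrt{\eta\bar{n}_{\rm S}}\,e^{i(\theta+\phi_k)}$ with known random offset $\phi_k$; subtracting $\phi_k$ and averaging the $n$ outcomes produces an estimator of classical Fisher information $\Theta(n\bar{n}_{\rm S})=\Theta(\sqrt{n})$, hence MSE $=\mathcal{O}(1/\sqrt{n})$. For the converse I would combine two bounds. First, the same relative-entropy identity gives $D_n\geq\Theta(\sum_k s_k^2)$ for an arbitrary, possibly entangled, probe, so covertness forces $\sum_k s_k^2=\mathcal{O}(1)$ and, by Cauchy--Schwarz, a total probe energy $\sum_k s_k=\mathcal{O}(\sqrt{n})$. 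Second, the quantum Cram\'er--Rao bound MSE $\geq 1/F_Q$, together with the fact that a lossy (and noisy) phase channel caps the quantum Fisher information at a quantity linear in photon number, $F_Q=\mathcal{O}(\sqrt{n})$, yields MSE $=\Omega(1/\sqrt{n})$. Any attempt to beat this scaling needs $F_Q=\omega(\sqrt{n})$, hence energy $\omega(\sqrt{n})$, hence $D_n\to\infty$, hence detection with high probability.

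The hardest step will be the converse cap on the quantum Fisher information, because it must hold against every probe state and measurement allowed by quantum mechanics---entangled, non-Gaussian, and jointly measured with the retained reference. In the idealized lossless, noiseless limit such probes can reach Heisenberg scaling $F_Q\propto(n\bar{n}_{\rm S})^2$, which would violate the square-root law, so the argument must exploit that any nonzero loss (a fortiori with added thermal noise) asymptotically restores linear scaling of $F_Q$ in the photon number. I would establish this via the channel-purification (minimization-over-extensions) technique for upper-bounding $F_Q$ of noisy phase channels, taking care to account correctly for the reference mode and for the cross-mode correlations that an adversary could in principle exploit; making this bound uniform in $n$ is the technical linchpin of the optimality claim.
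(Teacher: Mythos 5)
Your achievability half is essentially the paper's own argument: randomize the coherent-state amplitudes so that the adversary's reduced state is diagonal in the number basis (the paper draws $\alpha_i$ from an isotropic Gaussian, making it exactly thermal), bound distinguishability by quantum relative entropy plus Pinsker's inequality to get the per-mode budget $\bar{n}_{\rm S}=\mathcal{O}(1/\sqrt{n})$, then heterodyne the returns; this is precisely Theorem \ref{th:achievability} with Appendices \ref{app:pe_w} and \ref{app:receiver}, and your observation that randomization kills the leakage that is first order in $\bar{n}_{\rm S}$ is exactly the right mechanism.

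The converse, however, has a genuine gap: you use Pinsker's inequality in the wrong direction. Covertness is the operational statement that every adversary measurement has error probability at least $\tfrac{1}{2}-\epsilon$; via the optimal (Helstrom) test this only implies that the trace distance $\|\rho_1^{(n)}-\rho_0^{\otimes n}\|_1 \leq 4\epsilon$ is small. Pinsker upper-bounds trace distance by relative entropy; there is no reverse implication, and a state can be within trace distance $\delta$ of $\rho_0^{\otimes n}$ while $D\bigl(\rho_1^{(n)}\big\|\rho_0^{\otimes n}\bigr)$ is arbitrarily large. Hence both your step ``covertness forces $\sum_k s_k^2=\mathcal{O}(1)$'' and your closing chain ``energy $\omega(\sqrt{n})\Rightarrow D_n\to\infty\Rightarrow$ detection'' are invalid. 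The failure is not cosmetic: a sensor that sends vacuum with probability $1-\epsilon$ and an arbitrarily bright probe with probability $\epsilon$ keeps the adversary's state within trace distance $2\epsilon$ of the no-probe state (so it is covert), while $\langle N_{\rm S}\rangle$ and your $D_n$ blow up — so covertness alone simply does not bound the probe energy. This is exactly why the paper's Theorem \ref{th:converse} (i) adds the hypothesis $\langle \Delta N_{\rm S}^2\rangle=\mathcal{O}(n)$ on the total photon-number variance, which your proposal never imposes and which the counterexample above violates, and (ii) proves detectability operationally rather than information-theoretically: Appendix \ref{app:pe_w_ub} exhibits a concrete threshold test on the adversary's total photon count and bounds both error probabilities by Chebyshev's inequality, showing that any probe with $\langle N_{\rm S}\rangle=\omega(\sqrt{n})$ and variance $\mathcal{O}(n)$ is detected with vanishing error. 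Finally, the step you call the linchpin — a cap on the quantum Fisher information, uniform over all entangled probes, that is linear in the probe energy — is not re-derived in the paper either; it is imported as the explicit bound $C_{\mathrm{Q},n}(\theta)$ from \cite{gagatsos17cqbound}, and notably that bound also depends on $\langle \Delta N_{\rm S}^2\rangle$, consistent with the variance assumption being essential to the converse and not an artifact of the paper's proof technique.
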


In addition to the scaling law above, we characterize the constants hidden by the Big-$\mathcal{O}$ notation for several covert estimation schemes.  We find that using laser pulse modulation and heterodyne receiver yields MSE that is at most twice that of the laser light modulation coupled with the optimal receiver, and a factor $\frac{2}{1-\eta}$ greater than the ultimate lower bound. This limit on enhancing the design coupled with the constraint on the power per mode imposed by the covertness requirement implies that only increasing the number of available orthogonal modes $n$ can improve the performance of covert sensing systems. 

After introducing the channel model and the background on our performance metrics in the next section, we prove the square root law for covert sensing of phase in Section \ref{sec:proof}.  We then conclude with a discussion of future work in Section \ref{sec:conclusion}.

\section{Prerequisites}\label{sec:prerequisites}

\subsection{Estimation}
Consider a single-mode lossy bosonic channel $\mathcal{E}_{\eta}^{\bar{n}_{\rm B}}$ with path transmissivity $\eta\in(0,1)$ and thermal noise mean photon number $\bar{n}_{\rm B}>0$, as depicted in Figure \ref{fig:setup}.
The sensor (an optical interferometer) interrogates the target using an $n$-mode probe with average photon number $\bar{n}_{\rm S}$ per mode, where $1-\eta$ fraction of these photons is lost to the adversary, while the remaining fraction $\eta$ returns to the sensor after acquiring the unknown phase $\theta$ on each mode.
The sensor estimates $\theta$ using the collected light and retained state (e.g., a local oscillator for a coherent detector), and outputs estimate $\hat{\theta}_n$.
The sensor has to minimize the MSE of the estimate $\langle (\theta-\hat{\theta}_n)^2\rangle$ while preventing the detection of the probe by the adversary.
The quantum Cramer-Rao lower bound (QCRLB) for the MSE of 
  the estimate is \cite{demkowicz15quantmetrologysurvey}
\begin{align}
\label{eq:qcrlb}\langle (\theta - \hat{\theta}_n)^2 \rangle &\ge \frac{1}{\mathcal{J}_{\mathrm{Q},n}(\theta)},
\end{align}
where $\mathcal{J}_{\mathrm{Q},n}(\theta)$ is the quantum Fisher information 
  (QFI) associated with the $n$-mode probe state that acquires phase $\theta$ 
  on each mode.
If $n$-mode probe state is a tensor product of $n$ identical probe states, 
  each of which acquires phase $\theta$ independently, then 
\begin{align}
\label{eq:jqn}\mathcal{J}_{\mathrm{Q},n}(\theta)&=n\mathcal{J}_{\mathrm{Q}}(\theta),
\end{align}
  where $\mathcal{J}_{\mathrm{Q}}(\theta)$ is the QFI associated with
  each probe state.

\subsection{Detectability}
The adversary performs a binary hypothesis test on his sample to determine
  whether the target is being interrogated or not.
Performance of the hypothesis test is typically measured by its
  detection error probability
  $\mathbb{P}_{\rm e}^{\rm (det)}=\frac{\mathbb{P}_{\rm FA}+\mathbb{P}_{\rm MD}}{2}$,
  where equal prior probabilities on sensor's interrogation state are
  assumed, $\mathbb{P}_{\rm FA}$ is the probability of false alarm and
  $\mathbb{P}_{\rm MD}$ is the probability of missed detection.
The sensor desires to remain covert by
  ensuring that $\mathbb{P}_{\rm e}^{\rm (det)}\geq\frac{1}{2}-\epsilon$ for
  an arbitrary small $\epsilon>0$ regardless of adversary's measurement
  choice (since $\mathbb{P}_{\rm e}^{\rm (det)}=\frac{1}{2}$ for a random 
  guess).
By decreasing the power used in a probe, the sensor can decrease
  the effectiveness of the adversary's hypothesis test at the expense of
  the increased MSE of the estimate.

\section{Proof of the square root law for covert sensing}\label{sec:proof}
We begin by demonstrating in Section~\ref{sec:converse} that, no matter how one designs the transmitted probe and the measurement (which may include arbitrarily-complicated entangled transmitted states and quantum-limited joint-detection measurements over $n$ modes), the MSE cannot decay any faster than $\mathcal{O}(1/\sqrt{n})$ without the probe being detected by the adversary. Next, we establish the achievability of the SRL for covert phase sensing in Section~\ref{sec:achievability}, where we show that one can attain MSE $\langle(\theta-\hat{\theta}_n)^2\rangle=\mathcal{O}(1/\sqrt{n})$  using laser light illumination and coherent detection.  Finally, we argue for this scheme's near-optimality.

\subsection{Converse}
\label{sec:converse}
Here we show that the SRL for covert phase sensing is 
  insurmountable.
We denote the mean total photon number of the probe sent to the sensing arm 
  using $n$ modes by $\langle N_{\rm S}\rangle=n\bar{n}_{\rm S}$ and 
  the total photon number variance by $\langle\Delta N_{\rm S}^2\rangle$.
Just as in \cite[Theorem 5]{bash15covertbosoniccomm}, we restrict the sensor to
  using $n$-mode probes with total photon number variance
  $\langle \Delta N_{\rm S}^2\rangle=\mathcal{O}(n)$.
However, this restriction is not onerous, as it subsumes all well-known quantum
  states of bosonic mode.

We employ the asymptotic notation \cite[Ch.~3.1]{clrs2e} where 
  $f(n)=\Omega(g(n))$ and $f(n)=\omega(g(n))$ denote asymptotically
  tight and not tight lower bounds on $f(n)$, respectively.

\begin{theorem}[Converse of the square-root law]\label{th:converse}
Suppose the target is interrogated using an $n$-mode probe with a total of
  $\langle N_{\rm S}\rangle=n\bar{n}_{\rm S}$ photons, and that 
  the total photon number variance of the probe is 
  $\langle \Delta N_{\rm S}^2\rangle=\mathcal{O}(n)$.
Then, the sensing attempt is either detected by the adversary with arbitrarily 
  low detection error 
  probability, or the estimator has mean squared error 
  $\langle(\theta-\hat{\theta}_n)^2\rangle=\Omega(1/\sqrt{n})$.
\end{theorem}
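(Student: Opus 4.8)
The plan is to turn the covertness requirement into a photon budget and then let channel loss turn that budget into a precision floor. Two bounds carry the argument: covertness caps the mean probe energy at $\langle N_{\rm S}\rangle=\mathcal{O}(\sqrt{n})$, while loss caps the quantum Fisher information at $\mathcal{J}_{\mathrm{Q},n}(\theta)=\mathcal{O}(\langle N_{\rm S}\rangle)$. Chaining these through the QCRLB~\eqref{eq:qcrlb} yields $\langle(\theta-\hat{\theta}_n)^2\rangle\ge 1/\mathcal{J}_{\mathrm{Q},n}(\theta)=\Omega(1/\sqrt{n})$ whenever the probe remains covert, which is precisely the stated dichotomy.

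For the photon-number cap I would hand the adversary the natural phase-insensitive detector: count the total number of photons collected across the $n$ modes and compare to a threshold. Under the no-interrogation hypothesis the adversary's modes carry only thermal background, so the count has mean and variance both $\Theta(n)$ (independent thermal modes); under interrogation the mean shifts up by exactly $(1-\eta)\langle N_{\rm S}\rangle$, the probe energy leaked across the beamsplitter. The stipulated constraint $\langle\Delta N_{\rm S}^2\rangle=\mathcal{O}(n)$ is exactly what keeps the count variance under interrogation at $\Theta(n)$ as well, so the two count distributions separate---and $\mathbb{P}_{\rm e}^{\rm (det)}\to 0$---once the mean gap dominates the standard deviation, i.e.\ once $\langle N_{\rm S}\rangle=\omega(\sqrt{n})$. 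This is the bosonic counting argument of~\cite[Theorem 5]{bash15covertbosoniccomm}; its contrapositive is that keeping the adversary ineffective, $\mathbb{P}_{\rm e}^{\rm (det)}\ge\frac{1}{2}-\epsilon$, forces $\langle N_{\rm S}\rangle=\mathcal{O}(\sqrt{n})$.

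The second ingredient caps the QFI. The naive estimate $\mathcal{J}_{\mathrm{Q},n}(\theta)\le 4\langle\Delta N_{\rm R}^2\rangle$, with $N_{\rm R}$ the returned photon number that generates the phase, yields only $\mathcal{J}_{\mathrm{Q},n}(\theta)=\mathcal{O}(n)$ and hence the far weaker floor $\Omega(1/n)$, because the generator's variance is $\Theta(n)$. Instead I would use that a lossy, noisy channel forbids Heisenberg scaling. Commuting the phase rotation past the channel, as Fig.~\ref{fig:setup} permits, recasts the task as ordinary phase estimation followed by a transmissivity-$\eta$ thermal-noise channel, for which the QFI obeys the standard-quantum-limit ceiling $\mathcal{J}_{\mathrm{Q},n}(\theta)\le C(\eta,\bar{n}_{\rm B})\,\langle N_{\rm S}\rangle$---a universal bound holding for every input, including states entangled across the $n$ modes, and for arbitrary joint measurements~\cite{demkowicz15quantmetrologysurvey}. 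Substituting $\langle N_{\rm S}\rangle=\mathcal{O}(\sqrt{n})$ gives $\mathcal{J}_{\mathrm{Q},n}(\theta)=\mathcal{O}(\sqrt{n})$, and the QCRLB~\eqref{eq:qcrlb} then forces $\langle(\theta-\hat{\theta}_n)^2\rangle=\Omega(1/\sqrt{n})$.

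I expect the linear-in-$\langle N_{\rm S}\rangle$ QFI ceiling to be the main obstacle. Unlike the loose generator-variance bound, it must exclude any super-classical precision for all probes compatible with $\langle\Delta N_{\rm S}^2\rangle=\mathcal{O}(n)$ and for all measurement strategies; this is the content of the loss-limited metrology bounds, proved by purifying the channel and optimizing over the environment (channel-extension/classical-simulation methods). The remaining work is quantitative---carrying the constant $C(\eta,\bar{n}_{\rm B})$ through the thermal-noise channel rather than the pure-loss channel for which such bounds are usually quoted, and checking it stays finite and $n$-independent for $\eta\in(0,1)$, $\bar{n}_{\rm B}>0$. The photon-counting converse, by contrast, is routine once the variance constraint pins the adversary's count fluctuations at $\Theta(\sqrt{n})$.
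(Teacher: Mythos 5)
Your proposal is correct and follows essentially the same two-step route as the paper: a threshold test on total photon count (Chebyshev-style, exploiting $\langle \Delta N_{\rm S}^2\rangle=\mathcal{O}(n)$, as in \cite[Theorem 5]{bash15covertbosoniccomm}) forces $\langle N_{\rm S}\rangle=\mathcal{O}(\sqrt{n})$ for covertness, followed by a loss-limited, linear-in-energy QFI ceiling plus the QCRLB to get $\Omega(1/\sqrt{n})$. The thermal-noise QFI bound you flag as the main remaining work is exactly what the paper imports as $C_{\mathrm{Q},n}(\theta)$ in \eqref{eq:C_Q} from \cite{gagatsos17cqbound}, so no new ingredient is missing.
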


\begin{proof}
Suppose the optical interferometer depicted in Figure 
  \ref{fig:setup} uses a general pure state $\ket{\psi}^{P^nR^n}$,
  where $n$ modes are used in both the probe and the reference systems. 
Denoting by $\mathbb{N}_{0}$ the set of all non-negative integers, and by
  $\ket{\mathbf{k}}=\ket{k_1}\otimes\ket{k_2}\otimes\cdots\otimes\ket{k_n}$
  a tensor product of $n$ Fock states, the quantum state of the combined 
  (and potentially entangled) probe and reference states is formally defined as
$\ket{\psi}^{P^nR^n}=\sum_{\mathbf{k}\in\mathbb{N}_{0}^{n}}\sum_{\mathbf{k}^\prime\in\mathbb{N}_{0}^{n}}a_{\mathbf{k}, \mathbf{k}^\prime}\ket{\mathbf{k}}\ket{\mathbf{k}^\prime}$,
  where $\sum_{\mathbf{k}\in\mathbb{N}_{0}^{n}}\sum_{\mathbf{k}^\prime\in\mathbb{N}_{0}^{n}}|a_{\mathbf{k}, \mathbf{k}^\prime}|^2=1$.
The state in each system is obtained by tracing out the other,
  for example, the probe that is used to interrogate the target is
  $\rho^{P^n}=\trace_{R^n}\left(\ket{\psi}^{P^nR^nR^nP^n}\hspace{-4pt}\bra{\psi}\right)$.
Therefore, the mean total photon number in the probe is
  $\langle N_{\rm S}\rangle=\sum_{\mathbf{k}\in\mathbb{N}_{0}^{n}}\sum_{\mathbf{k}^\prime\in\mathbb{N}_{0}^{n}}\left(\sum_{i=1}^nk_i\right)|a_{\mathbf{k}, \mathbf{k}^\prime}|^2$
  and the total photon number variance is
  $\langle \Delta N_{\rm S}^2\rangle=\sum_{\mathbf{k}\in\mathbb{N}_{0}^{n}}\sum_{\mathbf{k}^\prime\in\mathbb{N}_{0}^{n}}\left(\sum_{i=1}^nk_i\right)^2|a_{\mathbf{k}, \mathbf{k}^\prime}|^2-\langle N_{\rm S}\rangle^2=\mathcal{O}(n)$.

Provided that the adversary captures a fraction $\gamma$ of the transmitted 
  photons, where $1-\eta\geq\gamma>0$, in 
  Appendix \ref{app:pe_w_ub} we show 
  that the interrogation attempt is detected with arbitrarily low error probability if
  $\langle N_{\rm S}\rangle=\omega(\sqrt{n})$.
To detect the sensor, the adversary uses a standard threshold test on the total 
  photon count output by a noisy photon 
  number resolving detector.\footnote{We also note that, if the sensor is 
  peak-power constrained (i.e., restricted to a finite photon number per mode),
  then a single photon detector is sufficient.}

When an $n$-mode probe passes through a lossy thermal-noise bosonic channel and
  acquires phase $\theta$ on each mode, we have an upper bound 
  $\mathcal{J}_{\mathrm{Q},n}(\theta)\leq C_{\mathrm{Q},n}(\theta)$, where
  \cite{gagatsos17cqbound}
\begin{align}
\label{eq:C_Q}C_{\mathrm{Q},n}(\theta)&=\frac{4 \eta  \langle N_{\rm S} \rangle \langle \Delta N_{\rm S}^2 \rangle \left(n (1+\bar{n}_{\rm B} (1-\eta)) + \eta \langle N_{\rm S} \rangle \right)}{D},
\end{align}
  with
\begin{align*}
D&=\eta  \langle N_{\rm S} \rangle \left(n (1+\bar{n}_{\rm B} (1-\eta)) + \eta \langle N_{\rm S} \rangle \right)\\
&\quad + (1-\eta)\eta \langle \Delta N_{\rm S}^2 \rangle \langle N_{\rm S} \rangle (1+2 \bar{n}_{\rm B})\\
&\quad - (1-\eta) \eta \langle \Delta N_{\rm S}^2 \rangle n \bar{n}_{\rm B} (1+\bar{n}_{\rm B})\\
&\quad + (1-\eta)n \langle \Delta N_{\rm S}^2 \rangle (1+\bar{n}_{\rm B})^2.
\end{align*}
The sensor must use an $n$-mode probe with 
  $\langle N_{\rm S}\rangle=\mathcal{O}(\sqrt{n})$ 
  photons to avoid detection, which implies that, by
  the QCRLB in \eqref{eq:qcrlb}, the MSE for any estimator of 
  $\theta$ is $\langle (\theta - \hat{\theta}_n)^2 \rangle=\Omega(1/\sqrt{n})$.
\end{proof}

\subsection{Achievability}\label{sec:achievability}

We now prove that the SRL for covert sensing is achievable even when the 
  adversary's capabilities are limited only by the laws of quantum mechanics.
That is, we allow the adversary to collect all the transmitted photons that do 
  not return to  the sensor, perform quantum-limited joint-detection 
  measurements over $n$ modes, and use arbitrary quantum
  computing and storage resources.

\begin{theorem}[Achievability]
\label{th:achievability}
Suppose the sensor attempts to estimate an unknown phase $\theta$ of a pixel 
  using an optical probe that passes through a lossy thermal-noise
  bosonic channel, as described in Figure \ref{fig:setup}.
Also suppose the adversary can perform an arbitrarily complex receiver 
  measurement as permitted by the laws of quantum physics
  and capture all the transmitted photons that do not return to the sensor. 
Then the sensor can lower-bound adversary's detection error probability 
  $\mathbb{P}_{\rm e}^{(\mathrm{det})}\geq\frac{1}{2}-\epsilon$ for any
  $\epsilon>0$ while achieving the MSE
  $\langle(\theta-\hat{\theta}_n)^2\rangle=\mathcal{O}(1/\sqrt{n})$
  using an $n$-mode probe.
\end{theorem}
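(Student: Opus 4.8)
The plan is to exhibit an explicit scheme and verify both covertness and the MSE scaling. In each of the $n$ modes I would transmit a coherent state $\ket{\beta_i}$ whose complex amplitude $\beta_i$ is drawn independently from a circularly-symmetric Gaussian distribution with $\langle|\beta_i|^2\rangle=\bar{n}_{\rm S}$, keeping the realizations $\{\beta_i\}$ secret from the adversary (they are generated by the sensor and tied to the retained reference). I would then set $\bar{n}_{\rm S}=c/\sqrt{n}$ for a constant $c>0$ to be fixed at the end, and at the receiver perform heterodyne detection on each returned mode, using knowledge of $\{\beta_i\}$ to form the estimate $\hat{\theta}_n$.

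For covertness, the key observation is that, because the adversary does not know $\{\beta_i\}$, its state is the Gaussian average of displaced thermal states, which is again a thermal state: writing $\rho_0$ for the state the adversary sees when the sensor is silent and $\rho_1$ for the state when the sensor interrogates, both are thermal and their mean photon numbers differ by $\Theta((1-\eta)\bar{n}_{\rm S})$. I would lower-bound the detection error probability through the quantum Pinsker inequality,
\[
\mathbb{P}_{\rm e}^{(\mathrm{det})}\ge\frac{1}{2}-\frac{1}{2}\sqrt{\tfrac{1}{2}\,D\!\left(\rho_1^{\otimes n}\,\middle\|\,\rho_0^{\otimes n}\right)}=\frac{1}{2}-\frac{1}{2}\sqrt{\tfrac{n}{2}\,D(\rho_1\|\rho_0)},
\]
and then use the fact that the relative entropy between two nearby thermal states is quadratic in the separation, so that $D(\rho_1\|\rho_0)=\mathcal{O}(\bar{n}_{\rm S}^2)$. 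With $\bar{n}_{\rm S}=c/\sqrt{n}$ this gives $nD(\rho_1\|\rho_0)=\mathcal{O}(c^2)$, and choosing $c$ small enough forces $\mathbb{P}_{\rm e}^{(\mathrm{det})}\ge\frac{1}{2}-\epsilon$. This portion parallels the achievability argument for covert bosonic communication in \cite{bash15covertbosoniccomm}, the crucial point being that Gaussian randomization makes the adversary's distinguishing signal second order in $\bar{n}_{\rm S}$, which is exactly what converts the $\mathcal{O}(1/\sqrt{n})$ power budget into a usable $\mathcal{O}(\sqrt{n})$ photon budget.

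For the MSE, I would model the heterodyne outcome on mode $i$ as $\gamma_i=\sqrt{\eta}\,\beta_i e^{i\theta}+w_i$, with $w_i$ a zero-mean complex Gaussian of variance $N_0=\mathcal{O}(1)$ set by the returned thermal noise and the unit heterodyne vacuum penalty. Multiplying by the known $\bar{\beta}_i$ and summing removes the random phases, yielding $Z\equiv\sum_i\bar{\beta}_i\gamma_i=\sqrt{\eta}\,e^{i\theta}\sum_i|\beta_i|^2+\sum_i\bar{\beta}_i w_i$, whose argument is my estimator. Since $S\equiv\sum_i|\beta_i|^2$ concentrates around $n\bar{n}_{\rm S}=c\sqrt{n}\to\infty$, I am in the high-SNR regime, and a standard small-angle computation (equivalently, the classical Cram\'er--Rao bound with per-mode Fisher information $2\eta|\beta_i|^2/N_0$) gives $\langle(\theta-\hat{\theta}_n)^2\rangle\approx N_0/(2\eta n\bar{n}_{\rm S})=N_0/(2\eta c\sqrt{n})=\mathcal{O}(1/\sqrt{n})$.

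I expect the main obstacle to be the covertness step rather than the estimation step: I must justify that the Gaussian-averaged leaked state is exactly thermal and control the relative entropy $D(\rho_1\|\rho_0)$ uniformly in $n$ so that its leading behavior is genuinely $\mathcal{O}(\bar{n}_{\rm S}^2)$ and not $\mathcal{O}(\bar{n}_{\rm S})$; the linear term is what would (incorrectly) force $\langle N_{\rm S}\rangle=\mathcal{O}(1)$ and destroy the square-root law. The estimation bound is comparatively routine once the high-SNR concentration of $S$ is noted, and I would finally remark that comparing the achieved constant against the converse bound $C_{\mathrm{Q},n}(\theta)$ of \eqref{eq:C_Q} establishes the claimed near-optimality of the laser-plus-heterodyne scheme.
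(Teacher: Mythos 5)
Your proposal is correct and takes essentially the same route as the paper: Gaussian-randomized coherent-state probes whose ensemble average appears thermal to the adversary, covertness established through quantum Pinsker's inequality together with the quadratic-in-$\bar{n}_{\rm S}$ behavior of the relative entropy between nearby thermal states (the paper makes this rigorous via an exact formula and Taylor's theorem, exactly the step you flag as the one needing care), and heterodyne detection giving MSE $\mathcal{O}\left(1/(n\eta\bar{n}_{\rm S})\right)=\mathcal{O}(1/\sqrt{n})$. The only cosmetic differences are the orientation of the relative entropy ($D(\rho_1\|\rho_0)$ versus the paper's $D(\hat{\rho}_0\|\hat{\rho}_1)$) and your maximum-ratio-combining estimator in place of the paper's equal-weight average followed by $\tan^{-1}$, which the paper analyzes via a truncated Taylor expansion with a worst-case bound of $\pi$ on the estimation error.
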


\begin{proof}
Coherent state is a quantum-mechanical description of ideal laser light.
Let the sensor use an $n$-mode tensor-product coherent state probe 
  $\bigotimes_{i=1}^n\ket{\alpha_i}$ with each $\alpha_i$ drawn 
  independently from an identical zero-mean isotropic complex Gaussian 
  distribution
  $p(\alpha) = e^{-|\alpha|^2/{\bar{n}_{\rm S}}}/{\pi {\bar{n}_{\rm S}}}$, 
  where photon number per state
  $\bar{n}_{\rm S}=\int_{\mathbb C}|\alpha|^2 p(\alpha){\rm d}^2\alpha$. 
Thus, $p(\bigotimes_{i=1}^n\ket{\alpha_i})=\prod_{i=1}^{n}p(\alpha_i)$.
In Appendix \ref{app:pe_w} we show that then
  the probability of detection by the adversary is lower-bounded by
\begin{align}
\label{eq:thermal_pe_lb_ph}\mathbb{P}_{\rm e}^{(\mathrm{det})}&\geq\frac{1}{2}-\frac{(1-\eta)\bar{n}_{\rm S}\sqrt{n}}{4\sqrt{\eta \bar{n}_{\rm B}(1+\eta \bar{n}_{\rm B})}}.
\end{align}
Thus, if the sensor sets
\begin{align}
\label{eq:nbar_ph}\bar{n}_{\rm S}&=\frac{4\epsilon\sqrt{\eta \bar{n}_{\rm B}(1+\eta \bar{n}_{\rm B})}}{\sqrt{n}(1-\eta)},
\end{align}
then he can ensure that the adversary's detection error probability can be 
  lower-bounded by
  $\mathbb{P}_{e}^{(\mathrm{det})}\geq\frac{1}{2}-\epsilon$ over $n$ modes.
In Appendix \ref{app:receiver} we show that the use of an ideal 
  heterodyne receiver achieves the MSE
\begin{align}
\label{eq:het}\langle (\theta - \hat{\theta}_{\mathrm{het},n})^2 \rangle &\approx \frac{c_{\rm het}}{\epsilon \sqrt{n}},
\end{align}
where the constant $c_{\rm het}$ is
\begin{align}
\label{eq:het_const}c_{\rm het}&=\frac{(1-\eta)\left(1+\bar{n}_{\rm B}(1-\eta)\right)}{8\eta\sqrt{\eta \bar{n}_{\rm B}(1+\eta \bar{n}_{\rm B})}}.
\end{align}
Practical heterodyne detectors operate close to the ideal limit, which implies  
 $\langle (\theta - \hat{\theta}_{\mathrm{het},n})^2 \rangle=\mathcal{O}(1/\sqrt{n})$.
\end{proof}

\subsection{The constant in the SRL for covert phase sensing}
\label{sec:limits}
Let's evaluate how far from optimal is the covert phase sensing scheme
  that uses laser light illumination and heterodyne detection, as 
  in the proof of Theorem \ref{th:achievability}.

In Appendix \ref{app:qfi_coh} we show that, when a single-mode 
  coherent state probe is used (with an arbitrary detector), 
  the QFI is
\begin{align}
\label{eq:QFICoh}\mathcal{J}_{\rm Q}^{\rm coh}(\theta)& = \frac{4\bar{n}_{\rm S} \eta}{1 + 2\bar{n}_{\rm B}(1-\eta)}.
\end{align}
Therefore, by \eqref{eq:qcrlb}, \eqref{eq:jqn}, and the substitution of 
  \eqref{eq:nbar_ph} in \eqref{eq:QFICoh}, we have 
  $\langle(\theta - \hat{\theta}_n)^2\rangle\ge\frac{c_{\mathrm{coh}}}{\epsilon \sqrt{n}}$,
  where
\begin{align}
\label{eq:coh_const}c_{\mathrm{coh}} &= \frac{(1-\eta)\left(1+2\bar{n}_{\rm B}(1-\eta)\right)}{16\eta\sqrt{\eta \bar{n}_{\rm B}(1+\eta \bar{n}_{\rm B})}}.
\end{align}
Thus, the MSE attainable using a coherent state probe and an ideal heterodyne 
  receiver is at most twice the quantum limit for a coherent state probe.
We also note that phase can be estimated adaptively using both 
  homodyne and heterodyne receivers \cite{wiseman95adaptive}, potentially 
  closing the gap to \eqref{eq:coh_const}.

Now consider the use of two-mode squeezed vacuum (TMSV) states, where one of 
  the modes is retained as reference while the other is used to probe the phase 
  of the target pixel.
Such states improve the scaling of the MSE in $\bar{n}_{\rm S}$ when there are
  no losses \cite{demkowicz15quantmetrologysurvey}.
The partial trace over one of the modes of the TMSV state yields a thermal
  state with the same Gaussian statistics in the coherent state basis as
  the states used in the proof of Theorem \ref{th:achievability}.
Therefore, since the adversary cannot not access the reference system, we can 
  use the steps in  the proof of Theorem \ref{th:achievability} to show
  the covertness.
In Appendix \ref{app:qfi_sq} we show that the QFI from using 
  the TMSV state is
\begin{align}
\label{eq:QFIsq}\mathcal{J}_{\rm Q}^{\rm sq} = \frac{4 \bar{n}_S (\bar{n}_S+1) \eta}{1+\bar{n}_B (1-\eta)+\bar{n}_S(1-\eta)(1+2 \bar{n}_B)}.
\end{align}
Note that when $\eta=1$ and $\bar{n}_{\rm S}=\mathcal{O}(1)$, 
  $\mathcal{J}_{\rm Q}^{\rm sq} = \mathcal{O}(\bar{n}_{\rm S}^2)$, consistent
  with previous findings that the TMSV states improve the scaling of the MSE in 
  $\bar{n}_{\rm S}$ in lossless scenarios 
  \cite{demkowicz15quantmetrologysurvey}.
However, the substitution of \eqref{eq:nbar_ph} in \eqref{eq:QFIsq}, 
  and the use of \eqref{eq:qcrlb} and \eqref{eq:jqn} yield
  $\langle(\theta - \hat{\theta}_n)^2\rangle\ge\frac{c_{\mathrm{sq}}}{\epsilon \sqrt{n}}$,
  where $c_{\mathrm{sq}}$ is approximated by discarding the low-order terms as
\begin{align}
\label{eq:sq_const}c_{\mathrm{sq}} &\approx \frac{(1-\eta)\left(1+\bar{n}_{\rm B}(1-\eta)\right)}{16\eta\sqrt{\eta \bar{n}_{\rm B}(1+\eta \bar{n}_{\rm B})}}.
\end{align}
The covertness constraint $\bar{n}_{\rm S}=\mathcal{O}(1/\sqrt{n})$ photons/mode
  yields the same scaling of the QFI in $\bar{n}_{\rm S}$ as the 
  coherent state.
While the TMSV state probe outperforms a coherent state probe in phase sensing 
  at high average noise photon number $\bar{n}_{\rm B}$,
  since the constant $c_{\rm het}$ attainable using a coherent state probe and
  heterodyne detection is only twice that of the best attainable constant for
  a TMSV probe $c_{\rm sq}$,
  the challenges associated with using squeezed states may not be worth it.

In fact, we can use the bound \eqref{eq:C_Q} on the QFI for an arbitrary 
  $n$-mode state 
  to derive the ultimate limit for the MSE of phase sensing over the lossy
  thermal-noise bosonic channel.
Since \eqref{eq:C_Q} is increasing in the total photon number variance
  $\langle \Delta N_{\rm S}^2 \rangle$,
  we can upper-bound the QFI as
\begin{align}
\mathcal{J}_{\rm Q}(\theta)&\leq \lim_{\langle \Delta N_{\rm S}^2 \rangle\to\infty} C_{\mathrm{Q},n}(\theta)\nonumber\\
\label{eq:C_Q_max}&=\frac{4 \eta  \langle N_{\rm S} \rangle (n (1+ (1-\eta)\bar{n}_{\rm B})+\eta  \langle N_{\rm S} \rangle)}{(1-\eta)D_{\ell}},
\end{align}
where
\begin{align*}
D_{\ell}&= (1+\bar{n}_{\rm B}) n (1+(1-\eta) \bar{n}_{\rm B})+\eta  (1+2 \bar{n}_{\rm B}) \langle N_{\rm S} \rangle.
\end{align*}
By \eqref{eq:qcrlb}, and the substitution of \eqref{eq:nbar_ph} in 
  \eqref{eq:C_Q_max} (where we note that 
  $\langle N_{\rm S} \rangle=n\bar{n}_{\rm S}$), we have 
  $\langle(\theta - \hat{\theta}_n)^2\rangle\ge\frac{c_{\mathrm{lb}}}{\epsilon \sqrt{n}}$,
  where $c_{\mathrm{lb}}$ is approximated by discarding the low-order terms as
\begin{align}
\label{eq:lb_const}c_{\mathrm{lb}} &\approx \frac{(1-\eta)^2\left(1+\bar{n}_{\rm B}\right)}{16\eta\sqrt{\eta \bar{n}_{\rm B}(1+\eta \bar{n}_{\rm B})}}.
\end{align}
Therefore, the MSE attainable using a practical sensing scheme
  is at most $\frac{2}{1-\eta}$ times the ultimate lower bound.

\section{Discussion}
\label{sec:conclusion}

Section \ref{sec:limits} shows that practically-attainable MSE is a small
  constant factor above optimal.
Moreover, since covertness imposes a strict power constraint
  unlike in other sensing scenarios, here one cannot 
  decrease the MSE by increasing power, 
The only degree of freedom in covert sensing (and communication) is $n$,
  the number of available orthogonal modes.
Now, $n=n_{\rm P}\times n_{\rm S}\times n_{\rm T}$, where $n_{\rm P}=2$ is 
  the number of orthogonal polarizations, $n_{\rm S}$ is the number of 
  orthogonal spatial modes (governed by the channel geometry), and 
  $n_{\rm T}\approx TW$ is the number of temporal modes (time-bandwidth product)
  with $T$ (in seconds) being the transmission time window and $W$ (in Hz) being
  the total spectral bandwidth of the source (see 
  \cite[Supplementary Note 1]{bash15covertbosoniccomm} for a deeper discussion).
Therefore, given a constraint on the available time $T$, one could increase 
  the number of spatial modes $n_{\rm S}$, or increase the spectral bandwidth 
  $W$, or both.  
We will explore this in a follow-up work.

Finally, while here we focus on phase sensing (and assume that the distance
  to the pixel is known), we plan on investigating
  the limits of covert signaling in other sensing tasks such as
  ranging, reflectometry, target detection and classification.
Since the error measures (s.t., the MSE and the probability of error) in many 
  sensing problems are also inversely proportional to the total probe power, 
  we believe that they are governed by the SRLs similar to the one here.
Moreover, simultaneous covert estimation of several parameters (e.g., range and 
  phase) enables covert quantum imaging with its many practical applications.

\begin{widetext}
\appendix

\section{Upper Bound for Adversary's Detection Error Probability in Theorem \ref{th:converse}}
\label{app:pe_w_ub}
Here we adapt the analysis of the adversary's detection error probability from
  the proof of \cite[Theorem 5]{bash15covertbosoniccomm}.

We assume that the sensing arm is lossy and that the adversary has access to 
  the fraction $1-\eta$ leaked photons.
Adversary measures the total photon count $X_{\mathrm{tot}}$ with a noisy 
  photon number resolving (PNR) receiver over the $n$ modes in which 
  the sensor could probe.
For some threshold $S$ (that we discuss later), the adversary declares that 
  the sensor interrogated the target when $X_{\mathrm{tot}}\geq S$, and did not 
  interrogate it when $X_{\mathrm{tot}}<S$.
When the sensor does not interrogate, the adversary observes noise: 
  $X_{\mathrm{tot}}^{(0)}=X_{\rm D}+X_{\rm T}$, where $X_{\rm D}$ 
  is the number of dark counts from the spontaneous emission process at 
  the detector, and $X_{\rm T}$ is the number of photons observed from 
  the thermal background.
We model the dark counts by a Poisson process with rate $\lambda$ 
  photons per mode.
Thus, both the mean and variance of the observed dark counts per
  mode is $\lambda$.
The mean of the number of photons observed per mode from the thermal 
  background with mean photon number per mode $\bar{n}_{\mathrm{B}}$ is 
  $\eta\bar{n}_{\mathrm{B}}$ and the variance is
  $\eta^2(\bar{n}_{\mathrm{B}}+\bar{n}_{\mathrm{B}}^2)$.
Therefore, the mean of the total number of noise photons observed per 
  mode is $\bar{n}_{\rm N}=\lambda+\eta \bar{n}_{\mathrm{B}}$, and, because 
  of the statistical independence of the noise processes, the total variance
  over $n$ modes is
  $\langle \Delta N^2_{\rm N}\rangle=n\lambda+n\eta^2(\bar{n}_{\mathrm{B}}+\bar{n}_{\mathrm{B}}^2)$.
We upper-bound the false alarm probability using Chebyshev's inequality: 
\begin{align}
\mathbb{P}_{\mathrm{FA}}&=\mathbb{P}(X_{\mathrm{tot}}^{(0)}\geq S)\nonumber\\
\label{eq:conv_pfa}&\leq\frac{\langle \Delta N^2_{\rm N}\rangle}{(S-n\bar{n}_{\rm N})^2}.
\end{align}
Thus, to obtain the desired $\mathbb{P}_{\mathrm{FA}}^*$, the adversary sets 
  threshold $S=n\bar{n}_{\rm N}+\sqrt{\langle \Delta N^2_{\rm N}\rangle/\mathbb{P}_{\mathrm{FA}}^*}$.

When the sensor uses a probe $\ket{\psi}^{P^nR^n}$ to interrogate, the adversary
  observes $X_{\mathrm{tot}}^{(1)}=X_u+X_{\rm D}+X_{\rm T}$, 
  where $X_u$ is the count from the transmission of the probe.
We upper-bound the missed detection probability using Chebyshev's inequality:
\begin{align}
\mathbb{P}_{\mathrm{MD}}&=\mathbb{P}(X_{\mathrm{tot}}^{(1)}<S)\nonumber\\
&\leq\mathbb{P}\left(|X_{\mathrm{tot}}^{(1)}-(1-\eta)\langle N_{\rm S}\rangle-\langle \Delta N^2_{\rm N}\rangle|\geq (1-\eta)\langle N_{\rm S}\rangle-\sqrt{\frac{\langle \Delta N^2_{\rm N}\rangle}{\mathbb{P}_{\mathrm{FA}}^*}}\right)\nonumber\\
\label{eq:indep_var}&\leq \frac{\langle \Delta N^2_{\rm N}\rangle+ (1-\eta)^2\langle \Delta N^2_{\rm S}\rangle}{((1-\eta)\langle N_{\rm S}\rangle-\sqrt{\langle \Delta N^2_{\rm N}\rangle/\mathbb{P}_{\mathrm{FA}}^*})^2},
\end{align}
  where equation \eqref{eq:indep_var} is because the noise and the probe
  are independent.
Since $\langle \Delta N^2_{\rm N}\rangle=\mathcal{O}(n)$ and we assume that
  $\langle \Delta N^2_{\rm S}\rangle=\mathcal{O}(n)$, if 
  $\langle N_{\rm S}\rangle=\omega(\sqrt{n})$, then
  $\lim_{n\rightarrow\infty}\mathbb{P}_{\mathrm{MD}}=0$.
Thus, given large enough $n$, the adversary can detect the probes that have
  mean photon number $\langle N_{\rm S}\rangle=\omega(\sqrt{n})$ with 
  probability of error $\mathbb{P}_{\rm e}^{(\mathrm{w})}\leq\epsilon$ for any 
  $\epsilon>0$.

\section{Lower Bound for Adversary's Detection Error Probability}
\label{app:pe_w}
We adapt the analysis of the adversary's detection error probability from
  the proof of \cite[Theorem 2]{bash15covertbosoniccomm}.

When the sensor is not probing the target, the adversary observes thermal 
  environment that is described by the following $n$-copy quantum state 
  (written in Fock state basis):
\begin{align}
\label{eq:rho0}\hat{\rho}_0^{\otimes n}&=\left(\sum_{i=0}^\infty \frac{(\eta \bar{n}_{\rm B})^i}{(1+\eta \bar{n}_{\rm B})^{1+i}}\ket{i}\bra{i}\right)^{\otimes n}
\end{align}

When the sensor probes the target, it first draws a sequence 
  $\boldsymbol{\alpha}=\{\alpha_i\}_{i=1}^n$ of independently and identically 
  distributed (i.i.d.) random variables from a zero-mean isotropic complex 
  Gaussian distribution
  $p(\alpha) = e^{-|\alpha|^2/{\bar{n}_{\rm S}}}/{\pi {\bar{n}_{\rm S}}}$.
It then interrogates the target using $n$-mode tensor-product coherent 
  state probe $\bigotimes_{i=1}^n\ket{\alpha_i}$.
Since the adversary does not have access to $\boldsymbol{\alpha}$, it 
  effectively experiences thermal noise in addition to the environment
  when the sensor probes the target.
Therefore, the following $n$-copy quantum state describes his observation in 
  this case:
\begin{align}
\label{eq:rho1}\hat{\rho}_1^{\otimes n}&=\left(\sum_{i=0}^\infty \frac{((1-\eta)\bar{n}_{\rm S}+\eta \bar{n}_{\rm B})^i}{(1+(1-\eta)\bar{n}_{\rm S}+\eta \bar{n}_{\rm B})^{1+i}}\ket{i}\bra{i}\right)^{\otimes n}.
\end{align}
The adversary has to discriminate between $\hat{\rho}_0$ and $\hat{\rho}_1$
  given in \eqref{eq:rho0} and \eqref{eq:rho1}, respectively.
By \cite[Lemma 2, Supplementary Information]{bash15covertbosoniccomm}, 
  adversary's average probability of discrimination error is:
\begin{align*}
\mathbb{P}_{e}^{(\mathrm{w})}&\geq\frac12\left[1 - \frac12\| \hat{\rho}_1^{\otimes n} - \hat{\rho}_0^{\otimes n} \|_1\right],
\end{align*}
  where a photon number resolving detector achieves the minimum in this case.
The trace distance $\|\hat{\rho}_0-\hat{\rho}_1\|_1$ between states 
  $\hat{\rho}_1$ and $\hat{\rho}_1$ is upper-bounded by the quantum relative 
  entropy (QRE) using quantum Pinsker's 
  Inequality~\cite[Theorem 11.9.5]{wilde13quantumit} as follows:
\begin{align*}
\|\hat{\rho}_0-\hat{\rho}_1\|_1&\leq\sqrt{2D(\hat{\rho}_0\|\hat{\rho}_1)},
\end{align*}
which implies that
\begin{align}
\label{eq:qre_pe_lb}\mathbb{P}_{e}^{(\mathrm{w})}&\geq\frac{1}{2}-\sqrt{\frac{1}{8}D(\hat{\rho}_0^{\otimes n}\|\hat{\rho}_1^{\otimes n})}.
\end{align}
Thus, ensuring that
\begin{align}
\label{eq:tot_qre_bound}D(\hat{\rho}_0^{\otimes n}\|\hat{\rho}_1^{\otimes n})&\leq 8\epsilon^2
\end{align}
ensures that $\mathbb{P}_{e}^{(\mathrm{w})}\geq\frac{1}{2}-\epsilon$ over $n$ modes.
QRE is additive for tensor product states:
\begin{align}
\label{eq:qre_additive}D(\hat{\rho}_0^{\otimes n}\|\hat{\rho}_1^{\otimes n})&=nD(\hat{\rho}_0\|\hat{\rho}_1). 
\end{align}
By \cite[Lemma 4, Supplementary Information]{bash15covertbosoniccomm},
\begin{align}
\label{eq:kl_therm}D(\hat{\rho}_0\|\hat{\rho}_1)&=\eta \bar{n}_{\rm B}\ln\frac{(1+(1-\eta)\bar{n}_{\rm S}+\eta \bar{n}_{\rm B})\eta \bar{n}_{\rm B}}{((1-\eta)\bar{n}_{\rm S}+\eta \bar{n}_{\rm B})(1+\eta \bar{n}_{\rm B})}+\ln\frac{1+(1-\eta)\bar{n}_{\rm S}+\eta \bar{n}_{\rm B}}{1+\eta \bar{n}_{\rm B}}.
\end{align}
The first two terms of the Taylor series expansion of the RHS of
  \eqref{eq:kl_therm} with respect to $\bar{n}_{\rm S}$ at $\bar{n}_{\rm S}=0$ 
  are zero and the fourth term is negative.
Thus, using Taylor's Theorem with the remainder, we can upper-bound 
  equation \eqref{eq:kl_therm} by the third term as follows:
\begin{align}
\label{eq:qre_ub}D(\hat{\rho}_0\|\hat{\rho}_1)&\leq\frac{(1-\eta)^2\bar{n}_{\rm S}^2}{2\eta \bar{n}_{\rm B} (1+\eta \bar{n}_{\rm B})}.
\end{align}
Combining equations \eqref{eq:qre_pe_lb}, \eqref{eq:qre_additive}, and 
  \eqref{eq:qre_ub} yields:
\begin{align}
\label{eq:thermal_pe_lb}\mathbb{P}_{\rm e}^{(\mathrm{w})}&\geq\frac{1}{2}-\frac{(1-\eta)\bar{n}_{\rm S}\sqrt{n}}{4\sqrt{\eta \bar{n}_{\rm B}(1+\eta \bar{n}_{\rm B})}}.
\end{align}

\section{Achievability of the Square Root Law with a Heterodyne Receiver}
\label{app:receiver}
Here we examine the performance of optical heterodyne receiver 
  used with coherent state probes.
We assume ideal shot-noise limited operation
  without a drift in local oscillator (LO) phase.
This assumption is reasonable: we can reduce the impact of excess noise 
  in the receiver by employing a sufficiently powerful LO
  and high-bandwidth electronic components, as well as track the LO phase as
  it drifts.
The sensor satisfies the covertness condition by interrogating 
  the target using $\bar{n}_{\rm S}$ photons/mode, where $\bar{n}_{\rm S}$ is 
  defined in \eqref{eq:nbar_ph}.

When a coherent state acquires a phase shift $\theta$ and is transmitted through
  a lossy-noisy bosonic channel, as depicted in Figure \ref{fig:setup},
  a heterodyne receiver outputs a noisy in-phase and quadrature components of 
  the coherent state that is shifted by $\theta+\phi$, where $\phi$ is 
  the relative phase between the probe and the LO.
We assume that each reading is corrupted by additive white Gaussian noise 
  (AWGN), as is the case in the limit of infinite-power LO 
  \cite{guha04mastersthesis, shapiro08qoc}; in practice, LO power substantially 
  exceeds signal and noise power, ensuring that AWGN is an accurate noise model.
We also assume that the sensor knows the distance to the target
  (simultaneous covert ranging and phase estimation is a challenging
  problem that we plan on addressing in future work).
Thus, the sensor controls $\phi$, and sets it to $\phi=0$.
The noise in the measurement of the in-phase component is independent
  of the noise in the measurement of the quadrature component and vice-versa.

The sensor collects two sequences of observations corresponding to in-phase and
  quadrature components: $\{X^{\rm (I)}_i\}$ and $\{X^{\rm (Q)}_i\}$, 
  $i=1,\ldots,n$.
Here $X^{\rm (I)}_i=\sqrt{\eta\bar{n}_{\rm S}}\cos(\theta)+Z^{\rm (I)}_i$
  and $X^{\rm (Q)}_i=\sqrt{\eta\bar{n}_{\rm S}}\sin(\theta)+Z^{\rm (Q)}_i$,
  with $\{Z^{\rm (I)}_i\}$ and $\{Z^{\rm (Q)}_i\}$
  being sequences of i.i.d. zero-mean Gaussian random variables
  $Z^{\rm (I)}_i\sim\mathcal{N}\left(0,\frac{1+\bar{n}_{\rm B}(1-\eta)}{2}\right)$ and
  $Z^{\rm (Q)}_i\sim\mathcal{N}\left(0,\frac{1+\bar{n}_{\rm B}(1-\eta)}{2}\right)$ 
  \cite{guha04mastersthesis}.
Let's normalize the observations by dividing them by 
  $\sqrt{\eta\bar{n}_{\rm S}}$. 
The resulting sequences are $\{Y^{\rm (I)}_i\}$ and 
  $\{Y^{\rm (Q)}_i\}$, such that
  $Y^{\rm (I)}_i=X^{\rm (I)}_i/\sqrt{\eta\bar{n}_{\rm S}}=\cos(\theta)+Z^{\rm (I,N)}_i$ and 
  $Y^{\rm (Q)}_i=X^{\rm (Q)}_i/\sqrt{\eta\bar{n}_{\rm S}}=\sin(\theta)+Z^{\rm (Q,N)}_i$, where
  $Z^{\rm (I,N)}_i\sim\mathcal{N}\left(0,\frac{1+\bar{n}_{\rm B}(1-\eta)}{2\eta\bar{n}_{\rm S}}\right)$ and
  $Z^{\rm (Q,N)}_i\sim\mathcal{N}\left(0,\frac{1+\bar{n}_{\rm B}(1-\eta)}{2\eta\bar{n}_{\rm S}}\right)$.

Consider the following estimator for $\theta$:
\begin{align}
\label{eq:estimator_het}\hat{\theta}_{\rm het}&=\tan^{-1}\left(\frac{\frac{1}{n}\sum_{i=1}^n Y^{\rm (Q)}_i}{\frac{1}{n}\sum_{i=1}^n Y^{\rm (I)}_i}\right)\\
&=\tan^{-1}\left(\frac{\sin(\theta)+\frac{1}{n}\sum_{i=1}^n Z^{\rm (Q,N)}_i}{\cos(\theta)+\frac{1}{n}\sum_{i=1}^n Z^{\rm (I,N)}_i}\right)\\
&=\tan^{-1}\left(\frac{\sin(\theta)+Z^{\rm (Q)}}{\cos(\theta)+Z^{\rm (I)}}\right),
\end{align}
where $Z^{\rm (I)}\sim\mathcal{N}(0,\sigma^2_{\rm het})$ and
  $Z^{\rm (Q)}\sim\mathcal{N}(0,\sigma^2_{\rm het})$.
The variance $\sigma^2_{\rm het}$ is:
\begin{align}
\label{eq:gauss_add}\sigma^2_{\rm het}&=\frac{1+\bar{n}_{\rm B}(1-\eta)}{2n\eta\bar{n}_{\rm S}}\\
\label{eq:sub_ns}&=\frac{1}{\epsilon \sqrt{n}}\left[\frac{(1-\eta)\left(1+\bar{n}_{\rm B}(1-\eta)\right)}{8\eta\sqrt{\eta \bar{n}_{\rm B}(1+\eta \bar{n}_{\rm B})}}\right],
\end{align}
  where \eqref{eq:gauss_add} is because independent Gaussian random variables
  are additive and \eqref{eq:sub_ns} is from substituting \eqref{eq:nbar_ph}.
The MSE is:
\begin{align}
\left\langle (\theta - {\hat \theta}_{\rm het})^2 \right\rangle &=\left\langle\left(\theta-\tan^{-1}\left(\frac{\sin(\theta)+Z^{\rm (Q)}}{\cos(\theta)+Z^{\rm (I)}}\right)\right)^2 \right\rangle\\
\label{eq:sub_polar}&=\left\langle\left(\theta-\tan^{-1}\left(\frac{\sin(\theta)+R\cos(\varphi)}{\cos(\theta)+R\sin(\varphi)}\right)\right)^2 \right\rangle
\end{align}
where in \eqref{eq:sub_polar} we use circular symmetry of the two-dimensional
  AWGN to change from the rectangular to polar coordinate system.
Thus, the radius is distributed as a Rayleigh random variable
  $R\sim\text{Rayleigh}(\sigma^2_{\rm het})$
  while the angle is distributed uniformly $\varphi\sim\mathcal{U}([0,2\pi])$.
Now, the Taylor series expansion of
  $\tan^{-1}\left(\frac{\sin(\theta)+r\cos(\varphi)}{\cos(\theta)+r\sin(\varphi)}\right)$ around $r=0$ is:
\begin{align}
\tan^{-1}\left(\frac{\sin(\theta)+r\cos(\varphi)}{\cos(\theta)+r\sin(\varphi)}\right)&=\theta+r\cos(\theta+\varphi)-\frac{r^2}{2}\sin(2(\theta+\varphi))-\frac{r^3}{3}\cos(3(\theta+\varphi))+\frac{r^4}{4}\sin(4(\theta+\varphi))\nonumber\\
&\phantom{=}+\frac{r^5}{5}\cos(5(\theta+\varphi))-\frac{r^6}{6}\sin(6(\theta+\varphi))-\frac{r^7}{7}\cos(7(\theta+\varphi))+\frac{r^8}{8}\sin(8(\theta+\varphi))\nonumber\\
\label{eq:taylor}&\phantom{=}+\ldots\\
\label{eq:taylor_ub}&\leq\theta+r\cos(\theta+\varphi)+\sum_{i=2}^\infty \frac{r^i}{i}\\
\label{eq:taylor_log}&=\theta+r\cos(\theta+\varphi)-(\log(1-r)+r)~\text{provided}~0\leq r <1,
\end{align}
where the upper bound in \eqref{eq:taylor_ub} is because 
  $\sin(x),\cos(x)\in[-1,1]$.
While this demonstrates the convergence of the Taylor series converges for 
  $r<1$, the $n^{\mathrm{th}}$ root test shows that the Taylor series in 
  \eqref{eq:taylor} does 
  not converge for $r>1$ (the series converges for $r=1$ by the alternating 
  series test, however, this is a zero-probability event).
However, since $\tan^{-1}(x)\in \left[-\frac{\pi}{2},\frac{\pi}{2}\right]$ and 
  $\theta\in \left(-\frac{\pi}{2},\frac{\pi}{2}\right)$, for any $r$ and 
  $\varphi$, 
\begin{align}
\label{eq:arctan_trivial_bound}\left|\tan^{-1}\left(\frac{\sin(\theta)+r\cos(\varphi)}{\cos(\theta)+r\sin(\varphi)}\right)-\theta\right|\leq \pi.
\end{align}
Therefore, using the Taylor series expansion of $\log(1-x)$ around $x=0$ in
  \eqref{eq:taylor_log}, and \eqref{eq:arctan_trivial_bound}, it is 
  straightforward to show there exist constants $a\in(0,1)$ and $b>0$ such that:
\begin{align}
\label{eq:arctan_ub}\left|\tan^{-1}\left(\frac{\sin(\theta)+r\cos(\varphi)}{\cos(\theta)+r\sin(\varphi)}\right)-\theta\right|\leq \left\{\begin{array}{ll}r\cos(\theta+\varphi)+br^2&\text{if}~r\leq a\\\pi&\text{otherwise}\end{array}\right.
\end{align}
We can use \eqref{eq:arctan_ub} to upper bound the MSE:
\begin{align}
\left\langle (\theta - {\hat \theta}_{\rm het})^2 \right\rangle &\leq\frac{1}{2\pi}\int_0^{2\pi}\left(\int_0^a(r\cos(\theta+\varphi)+br^2)^2\frac{re^{-x^2/2\sigma^2_{\rm het}}}{\sigma^2_{\rm het}}\mathrm{d}r+\int_a^\infty\pi^2\frac{re^{-x^2/2\sigma^2_{\rm het}}}{\sigma^2_{\rm het}}\mathrm{d}r\right)\mathrm{d}\varphi\\
&=\sigma^2_{\rm het}+8b^2\sigma^4_{\rm het}-\frac{1}{2}e^{-\frac{a^2}{2\sigma^2_{\rm het}}}\left(2a^4b^2+a^2(1+8c^2\sigma^2_{\rm het})+2\sigma^2_{\rm het}(1+8c^2\sigma^2_{\rm het})-2\pi^2\right)\\
\label{eq:het_mse_approx}&=\sigma^2_{\rm het}+\mathcal{O}(\sigma^4_{\rm het})\\
&=\frac{1}{\epsilon \sqrt{n}}\left[\frac{(1-\eta)\left(1+\bar{n}_{\rm B}(1-\eta)\right)}{8\eta\sqrt{\eta \bar{n}_{\rm B}(1+\eta \bar{n}_{\rm B})}}\right]+\mathcal{O}\left(\frac{1}{n}\right).
\end{align}

\section{Quantum Fisher information for phase estimation with coherent state and two-mode squeezed vacuum input}
\label{app:qfi}

Here we provide the details of calculating the quantum Fisher information (QFI) for estimating the unknown phase $\theta$ that is picked up by (a) a single-mode coherent state, and (b) by one of the modes of a two-mode squeezed vacuum (TMSV) state.  After the phase shift, the probe state passes through a lossy thermal-noise bosonic channel.  We note that the order of the phase shift and the bosonic channel does not affect the QFI \cite[Appendix A]{tsang13metrology}; we picked the order for the clarity of exposition.

To obtain the QFI we employ the quantum fidelity $F(\hat{\rho}_1,\hat{\rho}_2)$ between two arbitrary Gaussian quantum states $\hat{\rho}_1$ and $\hat{\rho}_2$ \cite{Pirandola2015}:
\begin{align}
F(\hat{\rho}_1,\hat{\rho}_2)&=F_0 \exp \left[ -\frac{1}{4} \bm{\delta}^T (\mathbf{V_1}+\mathbf{V_2})^{-1} \bm{\delta} \right],
\label{appCS:fid}
\end{align}
where $\mathbf{V_1}$ $\left(\mathbf{V_2}\right)$ is the covariance matrix corresponding to the density operator $\hat{\rho}_1$ $\left(\hat{\rho}_2\right)$, $\bm{\delta}$ is the difference of the displacement vectors of each state, 
\begin{align}
\bm{\delta}&=\bm{\delta_2}-\bm{\delta_1},
\label{appCS:delta}
\end{align}
the function $F_0$ is,
\begin{align}
F_0&=\frac{F_{\mathrm{tot}}}{\sqrt[4]{\det{\left(\mathbf{V_1}+\mathbf{V_2}\right)}}}
\label{appCS:F0}
\end{align}
and
\begin{align}
F_{\mathrm{tot}}&= \prod_{k=1}^K \left[w_k+\sqrt{w_k^2-1}\right]^{1/2}.
\label{appCS:FtotProd}
\end{align}
In \eqref{appCS:FtotProd}, $\pm w_k$, $k=1,\ldots,K$ are the (standard) eigenvalues of the matrix
\begin{align}
\mathbf{W}=-2 i \mathbf{V} \bm{\Omega},
\label{appCS:W} 
\end{align}
where $\bm{\Omega}$ is the symplectic invariant matrix
\begin{align}
\bm{\Omega} &=
\begin{pmatrix}
0 & 1 \\
-1 & 0
\end{pmatrix}
\otimes \mathbf{1},
\label{appCS:omega}
\end{align}
$\mathbf{1}$ is the identity matrix, and $\mathbf{V}$ is given by
\begin{align}
\mathbf{V}&= \bm{\Omega}^T \left( \mathbf{V_1}+\mathbf{V_2} \right)^{-1} \left( \frac{\bm{\Omega}}{4} +\mathbf{V_2} \bm{\Omega} \mathbf{V_1} \right).
\label{appCS:V}
\end{align}

We are now equipped to tackle the problem at hand.

\subsection{Coherent State}
\label{app:qfi_coh}
Initially, we have a coherent state $|\alpha \rangle$ with covariance matrix
\begin{align}
\mathbf{V}_{\mathrm{coh}} & = \frac{1}{2}
\begin{pmatrix}
1 & 0 \\
0 & 1
\end{pmatrix}
\end{align}
and displacement vector
\begin{align}
\mathbf{d}_{\mathrm{coh}} & = 
\begin{pmatrix}
\sqrt{2}\Re \alpha \\
\sqrt{2}\Im \alpha
\end{pmatrix}.
\end{align}
The mean photon number of the coherent state is $\bar{n}_{\rm S} = |\alpha|^2 = \Re \alpha^2 +  \Im \alpha^2$.
This coherent state picks up a phase $\theta$, which is described by the phase space transformation
\begin{align}
\mathbf{X}_{\theta} & = 
\begin{pmatrix}
\cos\theta & \sin\theta \\
-\sin\theta & \cos\theta
\end{pmatrix}.
\end{align}
Under this rotation, the covariance matrix of the coherent state remains the same,
\begin{align}
\mathbf{V}_{\mathrm{coh,}\theta} & = \mathbf{X}_{\theta} \mathbf{V}_{\mathrm{coh}} \mathbf{X}_{\theta}^T = \mathbf{V}_{\mathrm{coh}},
\label{appCS:VcohTheta}
\end{align}
while the displacement vector is transformed to:
\begin{align}
\mathbf{d}_{\mathrm{coh,}\theta}= \mathbf{X}_{\theta} \mathbf{d}_{\mathrm{coh}} & = 
\begin{pmatrix}
\sqrt{2}\Re \alpha \cos\theta + \sqrt{2}\Im \alpha \sin \theta\\
\sqrt{2}\Im \alpha \cos\theta - \sqrt{2}\Re \alpha \sin\theta
\end{pmatrix}.
\label{appCS:dcohTheta}
\end{align}
The rotated coherent state now passes through a lossy thermal-noise bosonic channel, where we denote the mean photon number per mode of the thermal environment by $\bar{n}_{\rm B}$. The transformation of the rotated coherent state by the lossy thermal-noise bosonic channel is $\mathbf{V}_{\theta} = \mathbf{X}_{\mathrm{tl}} \mathbf{V}_{\mathrm{coh,}\theta} \mathbf{X}^T_{\mathrm{tl}} + \mathbf{Y}_{\mathrm{tl}}$, $\mathbf{d}_{\theta} = \mathbf{X}_{\mathrm{tl}} \mathbf{d}_{\mathrm{coh,}\theta}+\mathbf{d}_{\mathrm{env}}$, where
\begin{align}
\label{appCS:Xtl}\mathbf{X}_{\mathrm{tl}} & = 
\begin{pmatrix}
\sqrt{\eta} & 0 \\
0 & \sqrt{\eta}
\end{pmatrix}\\
\mathbf{Y}_{\mathrm{tl}} & = (1-\eta)
\begin{pmatrix}
\label{appCS:Ytl}\bar{n}_{\rm B}+\frac{1}{2} & 0 \\
0 & \bar{n}_{\rm B}+\frac{1}{2}
\end{pmatrix}
\end{align}
and the displacement vector of the thermal environment is
\begin{align}
\mathbf{d}_{\mathrm{env}} & = 
\begin{pmatrix}
x_{\mathrm{th}} \\
y_{\mathrm{th}}
\end{pmatrix}.
\label{appCS:denv}
\end{align}
From \eqref{appCS:VcohTheta}, \eqref{appCS:dcohTheta}, \eqref{appCS:Xtl}, \eqref{appCS:Ytl}, and \eqref{appCS:denv} we obtain the final covariance matrix $\mathbf{V_1}$ and the displacement vector $\mathbf{d_1}$
\begin{align}
\label{appCS:V1}\mathbf{V_1} & = 
\begin{pmatrix}
\bar{n}_{\rm B} (1-\eta) +\frac{1}{2} & 0 \\
0 & \bar{n}_{\rm B} (1-\eta) +\frac{1}{2}
\end{pmatrix}\\
\label{appCS:d1}\mathbf{d_1} & = 
\begin{pmatrix}
x_\mathrm{th} + \sqrt{\eta} \sqrt{2} \left( \Re\alpha \cos\theta + \Im\alpha \sin\theta \right)  \\
y_\mathrm{th} + \sqrt{\eta} \sqrt{2} \left( \Im\alpha \cos\theta - \Re\alpha \sin\theta \right)
\end{pmatrix}.
\end{align}
We want to compute the quantum fidelity between the final state described by $\mathbf{V_1}$ and $\mathbf{d_1}$ and a state evolved by $d\theta$ in parameter space, i.e., a state with covariance matrix
\begin{align}
\mathbf{V_2} & \equiv \mathbf{V_1}({\theta \rightarrow \theta+d\theta})=\mathbf{V_1}
\label{appCS:V2}
\end{align}
and displacement vector
\begin{align}
\mathbf{d_2} & \equiv \mathbf{d_1}({\theta \rightarrow \theta+d\theta})=
\begin{pmatrix}
x_\mathrm{th} + \sqrt{\eta} \sqrt{2} \left( \Re\alpha \cos(\theta+d\theta) + \Im\alpha \sin(\theta+d\theta) \right)  \\
y_\mathrm{th} + \sqrt{\eta} \sqrt{2} \left( \Im\alpha \cos(\theta+d\theta) - \Re\alpha \sin(\theta+d\theta) \right)
\end{pmatrix}.
\label{appCS:d2}
\end{align}
Using \eqref{appCS:W}, \eqref{appCS:omega}, \eqref{appCS:V}, \eqref{appCS:V1}, and \eqref{appCS:V2} we derive:
\begin{align}
\mathbf{W} & = \frac{2 i}{1 + \bar{n}_{\rm B} (1-\eta)}
\begin{pmatrix}
0 & - \left( \bar{n}_{\rm B} (1-\eta) + \frac{1}{2} \right)^2 -\frac{1}{4} \\
\left( \bar{n}_{\rm B} (1-\eta) + \frac{1}{2} \right)^2 + \frac{1}{4} & 0
\end{pmatrix}.
\label{appCS:Wfinal}
\end{align} 
The form of $\mathbf{W}$ as expressed in \eqref{appCS:Wfinal} implies that it has two eigenvalues $\pm w_1$ with,
\begin{align}
w_1 & =\frac{2}{1 + \bar{n}_{\rm B} (1-\eta)}\left( \left( \bar{n}_{\rm B} (1-\eta) + \frac{1}{2} \right)^2 + \frac{1}{4} \right).
\label{appCS:eig}
\end{align}  
Using \eqref{appCS:delta}, \eqref{appCS:F0}, \eqref{appCS:FtotProd}, \eqref{appCS:V1}, \eqref{appCS:d1}, \eqref{appCS:V2}, and \eqref{appCS:d2} we find the quantum fidelity $F(\hat{\rho}_1,\hat{\rho}_2) \equiv F(d\theta)$ in (\ref{appCS:fid}). The QFI is given by four times the second order term of the expansion of $1-F(d\theta)$ (the $1/2$ factor in front of the expansion's second order term is not included):
\begin{align} 
\mathcal{J}_{\rm Q} & = 4 \frac{d^2}{d(d\theta)^2} (1-F(d\theta))\Bigg|_{d\theta=0} = \frac{4 \bar{n}_{\rm S} \eta}{1+2 \bar{n}_{\rm B}(1-\eta)}.
\end{align} 

\subsection{Two-mode Squeezed Vacuum (TMSV) State}
\label{app:qfi_sq}

The covariance matrix of the TMSV state is:
\begin{align}
\mathbf{V}_{\textrm{sq}} &= \frac{1}{2}
\begin{pmatrix}
\cosh 2|\xi|  & \sinh 2|\xi| & 0             & 0 \\
\sinh 2 |\xi| & \cosh 2|\xi| & 0             & 0  \\
  0           &  0           & \cosh 2|\xi|  & -\sinh 2|\xi|  \\
  0           &  0           & -\sinh 2|\xi| &\cosh 2|\xi|
\end{pmatrix},
\label{Vsq}
\end{align}
noting that the coordinates representation we use is of the form $\left(q_1,q_2, \ldots, p_1, p_2, \ldots \right)$. Also, the TMSV state is expressed in Fock (photon number) basis as follows:
\begin{eqnarray}
|00;|\xi| \rangle = \frac{1}{\cosh|\xi|} \sum_{k} \tanh|\xi|^k |kk\rangle.
\end{eqnarray}
We use one of the modes of TMSV state for probing the unknown, and keep the other as reference.
The mean photon number for either mode of the TMSV state is $\bar{n}_{\rm S} = \sinh^2 |\xi|$.

The probing mode of the TMSV passes through the lossy thermal-noise bosonic channel, while nothing happens to the reference mode. Therefore, the symplectic phase transformation is as follows:
\begin{align}
\mathbf{X'}_{\theta} &=
\begin{pmatrix}
 \cos \theta & 0 & \sin \theta & 0 \\
 0 & 1  & 0 &  0\\
 -\sin\theta & 0 & \cos\theta & 0\\
 0 & 0 & 0 & 1
\end{pmatrix},
\end{align} 
 and the complete channel transformation, i.e., lossy thermal-noise bosonic channel for the probing mode and identity for the reference modes, is as follows:
\begin{align}
\mathbf{X'}_{\rm tl} &= 
\begin{pmatrix}
\sqrt{\eta} & 0 & 0 & 0 \\
 0 & 1 & 0 & 0\\
 0 & 0 & \sqrt{\eta} & 0 \\
 0 & 0 & 0 & 1
\end{pmatrix}\\
\mathbf{Y'}_{\rm tl} &= (1-\eta)
\begin{pmatrix}
\bar{n}_{\rm B}+\frac{1}{2} & 0 & 0 & 0 \\
 0 & 0 & 0 & 0\\
 0 & 0 & \bar{n}_{\rm B}+\frac{1}{2} & 0 \\
 0 & 0 & 0 & 0
\end{pmatrix},
\end{align}   
where $\eta$ is the transmittance of the channel and $\bar{n}_{\rm B}$ is the thermal background mean photon number.

The output covariance matrix is $ \mathbf{V_1} = \mathbf{X'}_{\rm tl} \mathbf{X'}_{\theta} \mathbf{V}_{\rm sq} \mathbf{X'}_{\theta}^T \mathbf{X'}_{\rm tl} + \mathbf{Y'}_{\rm tl} $. Note that the displacement vector of the TMSV state is a zero-vector and the phase shifting information is carried by the output covariance matrix. Following the same procedure as in (\ref{appCS:V2}) and (\ref{appCS:Wfinal}) for the coherent state state probe, and the computing the eigenvalues of the latter, we find the QFI for estimating phase $\theta$ using a TMSV state:

\begin{align} 
\mathcal{J}_{\rm Q}^{\rm sq} = \frac{2\eta \sinh^2 2|\xi|}{1+\eta+(1+2 \bar{n}_{\rm B})(1-\eta) \cosh 2|\xi|}.
\label{AppCS:QFIxi}
\end{align}
Noting that $\bar{n}_{\rm S} = \sinh^2 |\xi| \rightarrow |\xi| = \arcsinh \sqrt{\bar{n}_{\rm S}}$, (\ref{AppCS:QFIxi}) can be written as:
\begin{align} 
\mathcal{J}_{\rm Q}^{\rm sq} = \frac{4 \bar{n}_{\rm S} (\bar{n}_{\rm S}+1) \eta}{1+\bar{n}_{\rm B} (1-\eta)+\bar{n}_{\rm S}(1-\eta)(1+2 \bar{n}_{\rm B})}.
\label{AppCS:QFIns}
\end{align}

\end{widetext}

\begin{thebibliography}{16}%
\makeatletter
\providecommand \@ifxundefined [1]{%
 \@ifx{#1\undefined}
}%
\providecommand \@ifnum [1]{%
 \ifnum #1\expandafter \@firstoftwo
 \else \expandafter \@secondoftwo
 \fi
}%
\providecommand \@ifx [1]{%
 \ifx #1\expandafter \@firstoftwo
 \else \expandafter \@secondoftwo
 \fi
}%
\providecommand \natexlab [1]{#1}%
\providecommand \enquote  [1]{``#1''}%
\providecommand \bibnamefont  [1]{#1}%
\providecommand \bibfnamefont [1]{#1}%
\providecommand \citenamefont [1]{#1}%
\providecommand \href@noop [0]{\@secondoftwo}%
\providecommand \href [0]{\begingroup \@sanitize@url \@href}%
\providecommand \@href[1]{\@@startlink{#1}\@@href}%
\providecommand \@@href[1]{\endgroup#1\@@endlink}%
\providecommand \@sanitize@url [0]{\catcode `\\12\catcode `\$12\catcode
  `\&12\catcode `\#12\catcode `\^12\catcode `\_12\catcode `\%12\relax}%
\providecommand \@@startlink[1]{}%
\providecommand \@@endlink[0]{}%
\providecommand \url  [0]{\begingroup\@sanitize@url \@url }%
\providecommand \@url [1]{\endgroup\@href {#1}{\urlprefix }}%
\providecommand \urlprefix  [0]{URL }%
\providecommand \Eprint [0]{\href }%
\providecommand \doibase [0]{http://dx.doi.org/}%
\providecommand \selectlanguage [0]{\@gobble}%
\providecommand \bibinfo  [0]{\@secondoftwo}%
\providecommand \bibfield  [0]{\@secondoftwo}%
\providecommand \translation [1]{[#1]}%
\providecommand \BibitemOpen [0]{}%
\providecommand \bibitemStop [0]{}%
\providecommand \bibitemNoStop [0]{.\EOS\space}%
\providecommand \EOS [0]{\spacefactor3000\relax}%
\providecommand \BibitemShut  [1]{\csname bibitem#1\endcsname}%
\let\auto@bib@innerbib\@empty
\bibitem [{\citenamefont {Bash}\ \emph {et~al.}(2013)\citenamefont {Bash},
  \citenamefont {Goeckel},\ and\ \citenamefont
  {Towsley}}]{bash13squarerootjsacisit}%
  \BibitemOpen
  \bibfield  {author} {\bibinfo {author} {\bibfnamefont {Boulat~A.}\
  \bibnamefont {Bash}}, \bibinfo {author} {\bibfnamefont {Dennis}\ \bibnamefont
  {Goeckel}}, \ and\ \bibinfo {author} {\bibfnamefont {Don}\ \bibnamefont
  {Towsley}},\ }\bibfield  {title} {\enquote {\bibinfo {title} {Limits of
  reliable communication with low probability of detection on {AWGN}
  channels},}\ }\href@noop {} {\bibfield  {journal} {\bibinfo  {journal}
  {{IEEE} J. Select. Areas Commun.}\ }\textbf {\bibinfo {volume} {31}},\
  \bibinfo {pages} {1921--1930} (\bibinfo {year} {2013})},\ \bibinfo {note}
  {{Originally presented at ISIT 2012, Cambridge MA}},\ \Eprint
  {http://arxiv.org/abs/arXiv:1202.6423} {arXiv:1202.6423} \BibitemShut
  {NoStop}%
\bibitem [{\citenamefont {Che}\ \emph {et~al.}(2013)\citenamefont {Che},
  \citenamefont {Bakshi},\ and\ \citenamefont {Jaggi}}]{che13sqrtlawbscisit}%
  \BibitemOpen
  \bibfield  {author} {\bibinfo {author} {\bibfnamefont {Pak~Hou}\ \bibnamefont
  {Che}}, \bibinfo {author} {\bibfnamefont {Mayank}\ \bibnamefont {Bakshi}}, \
  and\ \bibinfo {author} {\bibfnamefont {Sidharth}\ \bibnamefont {Jaggi}},\
  }\bibfield  {title} {\enquote {\bibinfo {title} {Reliable deniable
  communication: Hiding messages in noise},}\ }in\ \href@noop {} {\emph
  {\bibinfo {booktitle} {Proc. {IEEE} Int. Symp. Inform. Theory ({ISIT})}}}\
  (\bibinfo {address} {Istanbul, Turkey},\ \bibinfo {year} {2013})\ \bibinfo
  {note} {arXiv:1304.6693}\BibitemShut {NoStop}%
\bibitem [{\citenamefont {Bloch}(2016)}]{bloch15covert}%
  \BibitemOpen
  \bibfield  {author} {\bibinfo {author} {\bibfnamefont {M.~R.}\ \bibnamefont
  {Bloch}},\ }\bibfield  {title} {\enquote {\bibinfo {title} {Covert
  communication over noisy channels: A resolvability perspective},}\ }\href
  {\doibase 10.1109/TIT.2016.2530089} {\bibfield  {journal} {\bibinfo
  {journal} {IEEE Trans. Inf. Theory}\ }\textbf {\bibinfo {volume} {62}},\
  \bibinfo {pages} {2334--2354} (\bibinfo {year} {2016})}\BibitemShut {NoStop}%
\bibitem [{\citenamefont {Wang}\ \emph {et~al.}(2016)\citenamefont {Wang},
  \citenamefont {Wornell},\ and\ \citenamefont {Zheng}}]{wang15covert}%
  \BibitemOpen
  \bibfield  {author} {\bibinfo {author} {\bibfnamefont {L.}~\bibnamefont
  {Wang}}, \bibinfo {author} {\bibfnamefont {G.~W.}\ \bibnamefont {Wornell}}, \
  and\ \bibinfo {author} {\bibfnamefont {L.}~\bibnamefont {Zheng}},\ }\bibfield
   {title} {\enquote {\bibinfo {title} {Fundamental limits of communication
  with low probability of detection},}\ }\href {\doibase
  10.1109/TIT.2016.2548471} {\bibfield  {journal} {\bibinfo  {journal} {IEEE
  Trans. Inf. Theory}\ }\textbf {\bibinfo {volume} {62}},\ \bibinfo {pages}
  {3493--3503} (\bibinfo {year} {2016})}\BibitemShut {NoStop}%
\bibitem [{\citenamefont {Bash}\ \emph
  {et~al.}(2015{\natexlab{a}})\citenamefont {Bash}, \citenamefont {Gheorghe},
  \citenamefont {Patel}, \citenamefont {Habif}, \citenamefont {Goeckel},
  \citenamefont {Towsley},\ and\ \citenamefont
  {Guha}}]{bash15covertbosoniccomm}%
  \BibitemOpen
  \bibfield  {author} {\bibinfo {author} {\bibfnamefont {Boulat~A.}\
  \bibnamefont {Bash}}, \bibinfo {author} {\bibfnamefont {Andrei~H.}\
  \bibnamefont {Gheorghe}}, \bibinfo {author} {\bibfnamefont {Monika}\
  \bibnamefont {Patel}}, \bibinfo {author} {\bibfnamefont {Jonathan~L.}\
  \bibnamefont {Habif}}, \bibinfo {author} {\bibfnamefont {Dennis}\
  \bibnamefont {Goeckel}}, \bibinfo {author} {\bibfnamefont {Don}\ \bibnamefont
  {Towsley}}, \ and\ \bibinfo {author} {\bibfnamefont {Saikat}\ \bibnamefont
  {Guha}},\ }\bibfield  {title} {\enquote {\bibinfo {title} {Quantum-secure
  covert communication on bosonic channels},}\ }\href {\doibase
  10.1038/NCOMMS9626} {\bibfield  {journal} {\bibinfo  {journal} {Nat Commun}\
  }\textbf {\bibinfo {volume} {6}} (\bibinfo {year} {2015}{\natexlab{a}}),\
  10.1038/NCOMMS9626}\BibitemShut {NoStop}%
\bibitem [{\citenamefont {Bash}\ \emph
  {et~al.}(2015{\natexlab{b}})\citenamefont {Bash}, \citenamefont {Goeckel},
  \citenamefont {Guha},\ and\ \citenamefont {Towsley}}]{bash15covertcommmag}%
  \BibitemOpen
  \bibfield  {author} {\bibinfo {author} {\bibfnamefont {Boulat~A.}\
  \bibnamefont {Bash}}, \bibinfo {author} {\bibfnamefont {Dennis}\ \bibnamefont
  {Goeckel}}, \bibinfo {author} {\bibfnamefont {Saikat}\ \bibnamefont {Guha}},
  \ and\ \bibinfo {author} {\bibfnamefont {Don}\ \bibnamefont {Towsley}},\
  }\bibfield  {title} {\enquote {\bibinfo {title} {Hiding information in noise:
  Fundamental limits of covert wireless communication},}\ }\href@noop {}
  {\bibfield  {journal} {\bibinfo  {journal} {{IEEE} Commun. Mag.}\ }\textbf
  {\bibinfo {volume} {53}} (\bibinfo {year} {2015}{\natexlab{b}})},\ \Eprint
  {http://arxiv.org/abs/arXiv:1506.00066} {arXiv:1506.00066} \BibitemShut
  {NoStop}%
\bibitem [{\citenamefont {Tsang}(2013)}]{tsang13metrology}%
  \BibitemOpen
  \bibfield  {author} {\bibinfo {author} {\bibfnamefont {Mankei}\ \bibnamefont
  {Tsang}},\ }\bibfield  {title} {\enquote {\bibinfo {title} {Quantum metrology
  with open dynamical systems},}\ }\href {\doibase
  10.1088/1367-2630/15/7/073005} {\bibfield  {journal} {\bibinfo  {journal}
  {New Journal of Physics}\ }\textbf {\bibinfo {volume} {15}},\ \bibinfo
  {pages} {073005} (\bibinfo {year} {2013})}\BibitemShut {NoStop}%
\bibitem [{\citenamefont {Malik}\ \emph {et~al.}(2012)\citenamefont {Malik},
  \citenamefont {Maga{\~n}a-Loaiza},\ and\ \citenamefont
  {Boyd}}]{malik12quantumsecuredimaging}%
  \BibitemOpen
  \bibfield  {author} {\bibinfo {author} {\bibfnamefont {Mehul}\ \bibnamefont
  {Malik}}, \bibinfo {author} {\bibfnamefont {Omar~S.}\ \bibnamefont
  {Maga{\~n}a-Loaiza}}, \ and\ \bibinfo {author} {\bibfnamefont {Robert~W.}\
  \bibnamefont {Boyd}},\ }\bibfield  {title} {\enquote {\bibinfo {title}
  {Quantum-secured imaging},}\ }\href {\doibase 10.1063/1.4770298} {\bibfield
  {journal} {\bibinfo  {journal} {Applied Physics Letters}\ }\textbf {\bibinfo
  {volume} {101}},\ \bibinfo {pages} {241103} (\bibinfo {year} {2012})},\
  \Eprint {http://arxiv.org/abs/arXiv:1212.2605 [quant-ph]} {arXiv:1212.2605
  [quant-ph]} \BibitemShut {NoStop}%
\bibitem [{\citenamefont {Demkowicz-Dobrza\'{n}ski}\ \emph
  {et~al.}(2015)\citenamefont {Demkowicz-Dobrza\'{n}ski}, \citenamefont
  {Jarzyna},\ and\ \citenamefont
  {Ko\l{}ody\'{n}ski}}]{demkowicz15quantmetrologysurvey}%
  \BibitemOpen
  \bibfield  {author} {\bibinfo {author} {\bibfnamefont {R.}~\bibnamefont
  {Demkowicz-Dobrza\'{n}ski}}, \bibinfo {author} {\bibfnamefont
  {M.}~\bibnamefont {Jarzyna}}, \ and\ \bibinfo {author} {\bibfnamefont
  {J.}~\bibnamefont {Ko\l{}ody\'{n}ski}},\ }\bibfield  {title} {\enquote
  {\bibinfo {title} {Quantum limits in optical interferometry},}\ }\href
  {\doibase 10.1016/bs.po.2015.02.003} {\bibfield  {journal} {\bibinfo
  {journal} {Progress in Optics}\ }\textbf {\bibinfo {volume} {60}},\ \bibinfo
  {pages} {345} (\bibinfo {year} {2015})},\ \Eprint
  {http://arxiv.org/abs/arXiv:1405.7703 [quant-ph]} {arXiv:1405.7703
  [quant-ph]} \BibitemShut {NoStop}%
\bibitem [{\citenamefont {Cormen}\ \emph {et~al.}(2001)\citenamefont {Cormen},
  \citenamefont {Leiserson}, \citenamefont {Rivest},\ and\ \citenamefont
  {Stein}}]{clrs2e}%
  \BibitemOpen
  \bibfield  {author} {\bibinfo {author} {\bibfnamefont {Thomas~H.}\
  \bibnamefont {Cormen}}, \bibinfo {author} {\bibfnamefont {Charles~E.}\
  \bibnamefont {Leiserson}}, \bibinfo {author} {\bibfnamefont {Ronald~L.}\
  \bibnamefont {Rivest}}, \ and\ \bibinfo {author} {\bibfnamefont {Clifford}\
  \bibnamefont {Stein}},\ }\href@noop {} {\emph {\bibinfo {title} {Introduction
  to Algorithms}}},\ \bibinfo {edition} {2nd}\ ed.\ (\bibinfo  {publisher}
  {{MIT} Press},\ \bibinfo {address} {Cambridge, Massachusetts},\ \bibinfo
  {year} {2001})\BibitemShut {NoStop}%
\bibitem [{\citenamefont {Gagatsos}\ \emph {et~al.}(2017)\citenamefont
  {Gagatsos}, \citenamefont {Bash}, \citenamefont {Guha},\ and\ \citenamefont
  {Datta}}]{gagatsos17cqbound}%
  \BibitemOpen
  \bibfield  {author} {\bibinfo {author} {\bibfnamefont {Christos~N.}\
  \bibnamefont {Gagatsos}}, \bibinfo {author} {\bibfnamefont {Boulat~A.}\
  \bibnamefont {Bash}}, \bibinfo {author} {\bibfnamefont {Saikat}\ \bibnamefont
  {Guha}}, \ and\ \bibinfo {author} {\bibfnamefont {Animesh}\ \bibnamefont
  {Datta}},\ }\href@noop {} {\enquote {\bibinfo {title} {On bounding the
  quantum limits of estimation through a thermal loss channel},}\ }\bibinfo
  {howpublished} {arXiv:1701.05518 [quant-ph]} (\bibinfo {year}
  {2017})\BibitemShut {NoStop}%
\bibitem [{\citenamefont {Wiseman}(1995)}]{wiseman95adaptive}%
  \BibitemOpen
  \bibfield  {author} {\bibinfo {author} {\bibfnamefont {H.~M.}\ \bibnamefont
  {Wiseman}},\ }\bibfield  {title} {\enquote {\bibinfo {title} {Adaptive phase
  measurements of optical modes: Going beyond the marginal $q$ distribution},}\
  }\href {\doibase 10.1103/PhysRevLett.75.4587} {\bibfield  {journal} {\bibinfo
   {journal} {Phys. Rev. Lett.}\ }\textbf {\bibinfo {volume} {75}},\ \bibinfo
  {pages} {4587--4590} (\bibinfo {year} {1995})}\BibitemShut {NoStop}%
\bibitem [{\citenamefont {Wilde}(2013)}]{wilde13quantumit}%
  \BibitemOpen
  \bibfield  {author} {\bibinfo {author} {\bibfnamefont {M.M.}\ \bibnamefont
  {Wilde}},\ }\href@noop {} {\emph {\bibinfo {title} {Quantum Information
  Theory}}}\ (\bibinfo  {publisher} {Cambridge University Press},\ \bibinfo
  {year} {2013})\BibitemShut {NoStop}%
\bibitem [{\citenamefont {Guha}(2004)}]{guha04mastersthesis}%
  \BibitemOpen
  \bibfield  {author} {\bibinfo {author} {\bibfnamefont {Saikat}\ \bibnamefont
  {Guha}},\ }\emph {\bibinfo {title} {Classical Capacity of the Free-Space
  Quantum-Optical Channel}},\ \href@noop {} {Master's thesis},\ \bibinfo
  {school} {Massachusetts Institute of Technology} (\bibinfo {year}
  {2004})\BibitemShut {NoStop}%
\bibitem [{\citenamefont {Shapiro}(Spring 2008)}]{shapiro08qoc}%
  \BibitemOpen
  \bibfield  {author} {\bibinfo {author} {\bibfnamefont {J.~H.}\ \bibnamefont
  {Shapiro}},\ }\href@noop {} {\enquote {\bibinfo {title} {{6.453 Quantum
  Optical Communication}},}\ }\bibinfo {howpublished} {Massachusetts Institute
  of Technology: MIT OpenCouseWare} (\bibinfo {year} {Spring 2008}),\ \bibinfo
  {note} {\url{http://ocw.mit.edu } (Accessed June 15, 2016)}\BibitemShut
  {NoStop}%
\bibitem [{\citenamefont {Banchi}\ \emph {et~al.}(2015)\citenamefont {Banchi},
  \citenamefont {Braunstein},\ and\ \citenamefont {Pirandola}}]{Pirandola2015}%
  \BibitemOpen
  \bibfield  {author} {\bibinfo {author} {\bibfnamefont {Leonardo}\
  \bibnamefont {Banchi}}, \bibinfo {author} {\bibfnamefont {Samuel~L.}\
  \bibnamefont {Braunstein}}, \ and\ \bibinfo {author} {\bibfnamefont
  {Stefano}\ \bibnamefont {Pirandola}},\ }\bibfield  {title} {\enquote
  {\bibinfo {title} {Quantum fidelity for arbitrary gaussian states},}\ }\href
  {\doibase 10.1103/PhysRevLett.115.260501} {\bibfield  {journal} {\bibinfo
  {journal} {Phys. Rev. Lett.}\ }\textbf {\bibinfo {volume} {115}},\ \bibinfo
  {pages} {260501} (\bibinfo {year} {2015})}\BibitemShut {NoStop}%
\end{thebibliography}
\end{document}